\newtheorem{theorem}{Theorem}
\newtheorem{remark}[theorem]{Remark}
\newtheorem{proposition}[theorem]{Proposition}
\newtheorem{example}{Example}
\newtheorem{lemma}[theorem]{Lemma}
\newtheorem{construction}{Construction}
\newtheorem{observation}[theorem]{Observation}
\newcommand{\bF}{\mathbb{F}}
\newcommand{\cF}{\mathcal{F}}
\newcommand{\cU}{\mathcal{U}}
\newcommand{\boldc}{\mathbf{c}}
\newcommand{\boldu}{\mathbf{u}}
\DeclareSymbolFont{bbold}{U}{bbold}{m}{n}
\DeclareSymbolFontAlphabet{\mathbbold}{bbold}
\begin{document}
	

 \title{ 
 $\varepsilon$-MSR Codes \\ for Any Set of Helper Nodes}

		\author{Vinayak Ramkumar,  Netanel Raviv, and Itzhak Tamo\\
		\thanks{
	 Vinayak Ramkumar is with the  Department of Electrical Engineering--Systems, Tel Aviv University, Tel Aviv, Israel. Netanel Raviv is with the Department of Computer Science and Engineering, Washington University in St. Louis, St. Louis, MO, USA. Itzhak Tamo is with the Department of Electrical Engineering--Systems, Tel Aviv University, Tel Aviv, Israel. e-mail:	vinram93@gmail.com, netanel.raviv@wustl.edu, tamo@tauex.tau.ac.il. 
	
	The work of Vinayak Ramkumar and Itzhak Tamo was supported by the European Research Council (ERC grant number 852953).}
	}
	\maketitle
	\begin{abstract}
Minimum storage regenerating (MSR) codes are a class of maximum distance separable (MDS) array codes capable of repairing any single failed node by downloading the minimum amount of information from each of the helper nodes. However, MSR codes require large sub-packetization levels, which hinders their usefulness in practical settings. This led to the development of another class of MDS array codes called $\varepsilon$-MSR codes, for which the repair information downloaded from each helper node is at most a factor of $(1+\varepsilon)$ from the minimum amount for some $\varepsilon > 0$. The advantage of $\varepsilon$-MSR codes over MSR codes is their small sub-packetization levels. In previous constructions of epsilon-MSR codes, however, several specific nodes are required to participate in the repair of a failed node, which limits the performance of the code in cases where these nodes are not available. In this work, we present a construction of $\varepsilon$-MSR codes without this restriction. For a code with~$n$ nodes, out of which~$k$ store uncoded information, and for any number~$d$ of helper nodes ($k\le d<n$), the repair of a failed node can be done by contacting any set of~$d$ surviving nodes. Our construction utilizes group algebra techniques, and requires linear field size. We also generalize the construction to MDS array codes capable of repairing $h$ failed nodes using $d$ helper nodes with a slightly sub-optimal download from each helper node, for all $h \le r$ and $k \le d \le n-h$ simultaneously.  
 \end{abstract}
	\begin{IEEEkeywords}
		MDS array codes, regenerating codes, distributed storage, sub-packetization  
	\end{IEEEkeywords}
	\pagestyle{plain}
	
	\section{Introduction}
	In large-scale distributed storage systems, data needs to be stored in a redundant manner to ensure that the failure of storage nodes does not result in data loss. The na\"{i}ve solution is employing data replication; however, erasure coding is preferable as it offers the same reliability with lower storage overhead. In particular, maximum distance separable (MDS) codes are preferable for providing maximum fault tolerance for a given storage overhead.

The failure of a single node is a common occurrence in distributed storage systems, and the focus of the literature on codes for distributed storage \cite{CIT-115} has primarily been on the efficient repair of a single failed node. Regenerating codes, introduced by Dimakis et al. \cite{DimGodWuWaiRam}, minimize the \textit{repair bandwidth}, which is the amount of information downloaded from helper nodes to repair a failed node. These codes are array codes, with each code symbol being a vector of length~$\ell$ over some finite field, where $\ell$ is the \textit{sub-packetization level} of the code. Each node stores a code symbol (vector). Node failure is equivalent to the erasure of a code symbol, and node repair corresponds to recovering this erased code symbol. The nodes that assist in the repair process are called \textit{helper nodes}.  
	
	Among the class of regenerating codes, minimum storage regenerating (MSR) codes are of particular interest as they require minimum storage overhead for a given fault tolerance, i.e., they have the MDS property. The large sub-packetization requirement of MSR codes is one of the bottlenecks preventing the widespread usage of these codes in practical distributed storage systems. To tackle this issue, a class of MDS array code called $\varepsilon$-MSR codes was proposed [23].
 These codes enable the repair of a failed node by downloading from each helper node at most $(1+\varepsilon)$ times the optimal amount for some positive $\varepsilon$. It is shown in \cite{RawTamGur_epsilonMSR1} that small sub-packetization levels suffice for $\varepsilon$-MSR codes. For a constant number of parity nodes, the sub-packetization level of these codes scales only logarithmically with code length, which amounts to a doubly exponential savings in comparison to the best-known MSR codes \cite{LiWangHuYe}. However, previously known constructions of $\varepsilon$-MSR codes
 have the limitation that  
 either all the remaining nodes are contacted for the repair of a failed node  \cite{RawTamGur_epsilonMSR1}, or contacting a subset of remaining nodes suffices only if they include some specific nodes \cite{GuruswamiLJ20}.
 In short, these constructions 
 mandate certain nodes to
 serve as helper nodes during the repair process.
 The main goal of this paper is to construct $\varepsilon$-MSR codes that overcome this restriction. We obtain such an \(\varepsilon\)-MSR code by constructing a new MDS array code, which is then combined with a second code having a large normalized minimum distance. This new MDS has certain desirable repair properties, and it is constructed using group algebra techniques. 
 

 The remainder of the paper is organized as follows. In Section~\ref{sec:background} we provide the necessary background for $\varepsilon$-MSR codes and motivate our problem setting. In Section~\ref{sec:eps_msr} we present our construction of $\varepsilon$-MSR codes for which any arbitrary set of $d$ surviving nodes can serve as helper nodes for the repair of a failed node. In Section~\ref{sec:simul}, we generalize our construction to obtain MDS array codes that can repair any $h$ failed nodes using any $d$ helper nodes with a near-optimal repair bandwidth, for all $ h \le r$ and $k\le d \leq n-h$ simultaneously.

	\section{Notations and Background} \label{sec:background}
We begin this section by introducing notations and briefly describing MDS array codes and \(\varepsilon\)-MSR codes. For a positive integer \(n\), we use \([n]\) to denote the set of integers \(\{1, 2, \ldots, n\}\). All vectors are row vectors throughout this paper. For two integers \(a\) and \(b\), we use $a \bmod b$ to denote the remainder when \(a\) is divided by \(b\).

\subsection{MDS Array Codes}

An \((n,k,\ell)_\mathbb{F}\) array code is a code of length \(n\) and size \(|\mathbb{F}|^{\ell k}\) over the alphabet \(\mathbb{F}^\ell\), where \(\ell\) is the sub-packetization level of the code. 
Naturally, we view code symbols as vectors of length~$\ell$ over \(\mathbb{F}\). If the code is MDS, i.e., if each codeword possesses the property that the entire codeword can be recovered from any \(k\) out of these \(n\) code symbols, then it is termed an \((n,k,\ell)_\mathbb{F}\) MDS array code. These codes were developed to be used in distributed storage systems as follows. A file containing \(k\ell\) symbols over \(\mathbb{F}\) is encoded into a codeword comprised of \(n\) vectors of length \(\ell\) over \(\mathbb{F}\). These \(n\) vectors are stored on \(n\) different storage nodes, and thus, each node stores \(\ell\) symbols of \(\mathbb{F}\). The erasure of \(\ell\) symbols stored in a node is called a \textit{node failure}, and the process of recovering its contents by downloading symbols from the surviving nodes is called a \textit{node repair}.

When the array code is linear, it can be defined by a parity check matrix as follows.   
Let \(\mathcal{C} \subseteq \mathbb{F}^{n\ell}\) be a linear array code of block length \(n\) and sub-packetization level \(\ell\). It is defined by an \(r\ell \times n\ell\) parity-check matrix \(H\) over the finite field \(\mathbb{F}\) as \(\mathcal{C} = \{\mathbf{c} \in \mathbb{F}^{n\ell} \mid H\mathbf{c}^\intercal = \mathbf{0}\}\). Any codeword \(\mathbf{c} \in \mathcal{C}\) can be written as \(\mathbf{c} = (\mathbf{c}_1, \dots, \mathbf{c}_n) \in \mathbb{F}^{n\ell} \), with node \(i \in [n]\) storing \(\mathbf{c}_i \in \mathbb{F}^{\ell}\). 
Notice that every linear code has a so-called \textit{systematic form}, in which~$k$ nodes store uncoded information symbols and~$r=n-k$ nodes store \textit{parity symbols}.

Let \(H = [H_1 \dots H_n]\), with each sub-matrix \(H_i\) being an \(r\ell \times \ell\) matrix. For a subset \(S = \{i_1, \dots, i_{|S|}\} \subseteq [n]\), let 
\(H_S = [H_{i_1} \dots H_{i_{|S|}}] \in \mathbb{F}^{r\ell \times |S|\ell}\).
If $H_S$ is non singular for all~$S\subseteq[n]$ of size~$r$,
then \(H\) is a parity-check matrix of an \((n,k = n-r,\ell)_{\mathbb{F}}\) MDS array code. 
Throughout this paper, we define MDS array codes using parity-check matrices.
	
	\subsection{Minimum Storage Regenerating Codes}
An \((n,k,\ell)_\mathbb{F}\) MDS array code is said to be an \((n,k,d,\ell)_\mathbb{F}\) MSR code if any failed node can be repaired (i.e., its contents recreated) by contacting any arbitrary collection of \(d\) surviving nodes and downloading \(\frac{\ell}{d-k+1}\) symbols (over \(\mathbb{F}\)) from each of these \(d\) helper nodes. The number of helper nodes \(d\) will be referred to as the \textit{repair degree} and it satisfies \(k \le d < n\). Note that if \(d = k\), the repair is equivalent to downloading all the information from any subset of \(k\) nodes. This property holds automatically since the code is an MDS code. Hence, the nontrivial case is when \(k < d\).

In \cite{DimGodWuWaiRam}, it is shown that for MDS array codes, if \(d\) helper nodes are employed for node repair, then the repair bandwidth is at least \(\frac{d\ell}{d-k+1}\) symbols (over \(\mathbb{F}\)). In other words, the amount of information downloaded for node repair of MSR codes is the minimum possible for MDS array codes. MSR codes also have the load balancing property, i.e., the same amount of information is downloaded from each helper node.

For general MSR codes, during the repair process, helper nodes may have to transmit a function of the data stored in them. This can be problematic at times since it is possible that the amount of information transmitted is optimal; however, in order to compute it, one has to access and read the entire information stored on the node, which can be time-consuming. Therefore, a more desirable property is that the information that needs to be transmitted is exactly the information that is read from the node, and there is no need for any computation. An MDS array code is said to have the \textit{help-by-transfer} property
if the helper nodes access only the symbols that they transmit\footnote{This property is the same as the repair-by-transfer property in \cite{RawTamGur_epsilonMSR1} and the optimal-access property in \cite{YeBar_2}. In this work, we follow the help-by-transfer terminology in \cite{SasAgaKum}.}. 
Therefore, for MSR codes with the help-by-transfer property, only \(\frac{\ell}{d-k+1}\) symbols are accessed at each helper node during the repair.

	
	An early construction of MSR codes is of the product-matrix type \cite{RasShaKum_pm11}. Although the sub-packetization level requirement \(\ell = d - k + 1\) is small, this construction is limited to the low-rate regime of \(\frac{k}{n} \le \frac{1}{2} + \frac{1}{2n}\). In \cite{CadJafMalRamSuh}, the existence of MSR codes for all valid \((n, k, d)\) parameters as \(\ell \rightarrow \infty\) is shown. Several constructions of high-rate MSR codes are known in the literature, including those in \cite{TamWanBru, PapDimCad, WangTB16, SasAgaKum, RawatKV16, YeBar_1, GopFazVar, RavSilEtz, YeBar_2, SasVajKum_arxiv, LiTT18, VajhaBK23, LiWangHuYe}. 

Let \(s = d - k + 1\) and \(r = n - k\). For general \(k + 1 \le d \le n - 1\) parameters, the existence of MSR codes possessing the help-by-transfer property with sub-packetization level \(\ell = s^{\lceil \frac{n}{s} \rceil}\) is known \cite{RawatKV16, VajhaBK23, LiWangHuYe}. Even without the help-by-transfer requirement, the best-known construction \cite{LiWangHuYe} has a sub-packetization level of \(\ell = s^{\lceil \frac{n}{s+1} \rceil}\). We note that all the known MSR codes are linear, meaning that encoding and repair involve only linear operations.

Lower bounds on the sub-packetization level of linear MSR codes are derived in \cite{TamWanBru_access_tit, GoparajuTC14, AlrabiahG21, BabuVK22}. Specifically, it follows from the lower bound in \cite{BabuVK22} that \(\ell = s^{\lceil \frac{n}{s} \rceil}\) is the smallest possible sub-packetization level for linear MSR codes with the help-by-transfer property. For a constant number of parity nodes \(r\), the lower bound in \cite{AlrabiahG21} establishes that a sub-packetization level that is exponentially large in \(k\) is necessary for linear MSR codes. 

We refer readers to the survey in \cite{CIT-115} for a detailed discussion about various MSR code constructions and sub-packetization level lower bounds.

	\subsection{Small Sub-packetization Level}
Codes with small sub-packetization levels are preferable in practical systems for various reasons. A detailed description of the issues associated with large sub-packetization levels is provided in \cite{RawTamGur_epsilonMSR1}. It is argued that large sub-packetization levels reduce flexibility in selecting various system parameters. The difficulty in managing meta-data is another disadvantage of large sub-packetization levels, as listed in \cite{RawTamGur_epsilonMSR1}. Additionally, small sub-packetization levels lead to efficient degraded reads, as illustrated in \cite{RawTamGur_epsilonMSR1}. Another advantage of small sub-packetization levels is that they allow for the encoding of small files \cite{VajhaBK23}. In \cite{VajRamPur_Clay}, it is shown that large sub-packetization levels result in fragmented reads at helper nodes, leading to inferior disk read performance during repair. In summary, constructing codes with small sub-packetization levels is an important problem.

There have been multiple attempts in the literature to develop MDS array codes with small sub-packetization levels and near-optimal repair bandwidth, and the \(\varepsilon\)-MSR codes framework is one such effort. Early works investigating this problem include \cite{RasShaRam_Piggyback} and \cite{TamWanBru}. In \cite{RawTamGur_epsilonMSR1}, Rawat et al. presented an MDS array code with sub-packetization level \(\ell = (n - k)^{\tau}\) capable of repairing any failed node by accessing all remaining \(d = n - 1\) nodes, for any integer $1 \le \tau \le \lceil \frac{n}{n-k} \rceil -1$. The resulting repair bandwidth is \(\le (1 + \frac{1}{\tau})(\frac{n - 1}{n - k})\ell\). However, these codes require a very large field size.
In \cite{LiLiuTang}, Li et al. introduced multiple constructions of small sub-packetization level MDS array codes with similar repair bandwidths, again only for \(d = n - 1\). Some of these codes can be constructed over a field of size \(O(n)\). The existence of MDS array codes with \(\ell = a^{m + n - 1}\), where \(n - k = a^m\), and asymptotically optimal repair bandwidth for \(d = n - 1\) is shown in \cite{ChowVard}. In \cite{lin20231}, two MDS array codes with \(d < n - 1\), having small sub-packetization levels and near-optimal repair bandwidth, are constructed.

All the above-mentioned MDS array codes suffer from the defect that they do not possess the load balancing property. During repair, some helper nodes may have to contribute significantly more information than others, which may not be desirable in many system settings. Motivated by this, the authors of \cite{RawTamGur_epsilonMSR1} introduced the class of MDS array codes called \(\varepsilon\)-MSR codes, in which approximately the same amount of information is downloaded from each helper node during node repair.

	\subsection{$\varepsilon$-MSR Codes}  
	An \((n,k,\ell)_\mathbb{F}\) MDS array code is said to be an \((n,k,d,\ell)_\mathbb{F}\) \(\varepsilon\)-MSR code for \(\varepsilon > 0\) \cite{RawTamGur_epsilonMSR1} if the following property is satisfied:  
\begin{itemize}
    \item \(\varepsilon\)-optimal repair property: The content of any failed node can be recovered by contacting any arbitrary collection of \(d\) remaining nodes and downloading at most \((1+\varepsilon)\frac{\ell}{d-k+1}\) symbols over \(\mathbb{F}\) from each of these helper nodes. 
\end{itemize}
The amount of helper information downloaded from each of the \(d\) nodes is thus at most \((1+\varepsilon)\) times that of MSR codes. Setting \(\varepsilon = 0\) in the above description reduces to MSR codes.

For repair degree \(d = n-1\), an \(\varepsilon\)-MSR code is constructed in \cite{RawTamGur_epsilonMSR1} using an MSR code construction in \cite{YeBar_1} as a building block. The sub-packetization level of this code scales logarithmically with \(n\), provided the number of parity nodes \(r = n - k\) is a constant. This represents a substantial saving compared to MSR codes, which require \(\ell = \Omega(\exp(k))\). Additionally, this code requires a field size linear in \(n\). In \cite{GuruswamiLJ20}, an attempt was made to construct \(\varepsilon\)-MSR codes for the \(d < n-1\) case. However, the construction in \cite{GuruswamiLJ20} lacks the flexibility of choosing any \(d\) nodes among the surviving nodes as helper nodes and 
mandates a few nodes to be compulsorily contacted while repairing a failed node. Thus, strictly speaking, the code in \cite{GuruswamiLJ20} is not an \((n,k,d,\ell)_\mathbb{F}\) \(\varepsilon\)-MSR code. The advantage of having \(d < n-1\) is that the repair process can take place even if up to \(n-1-d\) nodes are temporarily unavailable. 
If some nodes must participate in the repair process (as in \cite{GuruswamiLJ20}), then the entire repair process is stalled even if one of them is unavailable.
In the current paper, we present the first construction of \(\varepsilon\)-MSR code without any compulsory helper node. See Table~\ref{table:compare} for a summary of the above discussion. 
\begin{table}[ht!]
\begin{center} 
\begin{tabular}{|c|c|c|} 
 \hline
 Reference& Range of $d$ & Number of compulsory helper nodes  \\
 \hline 
 Rawat et al. \cite{RawTamGur_epsilonMSR1} & $d=n-1$ & $n-1$ \\ 
 \hline 
 Guruswami et al. \cite{GuruswamiLJ20} &  $k \le d < n$ & $n-o(n)$ \\ 
 \hline 
 This paper & $k \le d < n$ & None \\   
 \hline
\end{tabular} 
 \caption{Comparison between different constructions.}
 \label{table:compare}
 \end{center}
 \end{table}
\subsection{Multiple Node Failures and Multiple Repair Degrees}
Suppose \(h\le r\) nodes failed in an \((n,k=n-r,\ell)_\mathbb{F}\) MDS array code, and suppose \(d\) helper nodes are contacted to repair it. We assume a centralized repair setting where a central entity collects the helper data and repairs the failed nodes. In this case it is known \cite{CadJafMalRamSuh} the each helper node needs to transmit at least \(\frac{h\ell}{d-k+h}\) symbols (over \(\mathbb{F}\)). Following the terminology from \cite{YeBar_1}, we say that an \((n,k,\ell)_\mathbb{F}\) MDS array code has the \((h,d)\) optimal repair property if it is possible to repair the contents of any \(h\) failed nodes by downloading \(\frac{h\ell}{d-k+h}\) symbols from each of any \(d\) surviving nodes. It follows from this definition that an \((n,k,d,\ell)_\mathbb{F}\) MSR code has the \((1,d)\) optimal repair property. It should also be noted that the MDS property is equivalent to \((h=r,d=k)\) optimal repair property.

The number of failed nodes and available nodes can change over time. Therefore, the ability to simultaneously support all possible \(h\le r\) and \(d\le n-h\) is a useful property. In \cite{YeBar_1}, two explicit families of MDS array codes having \((h,d)\) optimal repair property for all \( (h,d) \) such that \(1 \le h \le r\) and \(k \le d \le n-h\) simultaneously are presented. 
The codes in \cite{YeBar_1} require a sub-packetization level \(\ell = \theta^n\), where \(\theta = \text{lcm}(1,2,\dots,r)\). 
In \cite{zhang2024}, such MDS array codes with \(\ell = \theta r^n\) are provided. Clearly, the sub-packetization levels of these codes are exponential in \(n\) for constant \(r\). Hence, reducing the sub-packetization level by slightly sacrificing the repair bandwidth is meaningful for these types of codes as well.

We say that an \((n,k,\ell)_\mathbb{F}\) MDS array code has \((h,d)\) \(\varepsilon\)-optimal repair property if the contents of any \(h\) failed nodes can be recovered by downloading \((1+\varepsilon)\frac{h\ell}{d-k+h}\) symbols from each of any \(d\) helper nodes. Note that setting \(\varepsilon = 0\) yields the \((h,d)\) optimal repair property. It can be seen that an \((n,k,\ell)_\mathbb{F}\) \(\varepsilon\)-MSR code has \((1,d)\) \(\varepsilon\)-optimal repair property. An \((n,k,\ell)_\mathbb{F}\) MDS array code with \((h,d)\) \(\varepsilon\)-optimal repair property is said to have \((h,d)\) \(\varepsilon\)-optimal help-by-transfer property if the symbols downloaded from each helper node during repair are a subset of the symbols stored in that helper node.
	
	\subsection{Our Contribution}
	In this paper, we present the first construction of \(\varepsilon\)-MSR codes that allows all repair degrees, i.e., \(k 
 \le d < n\), over a field of size \(O(n(d-k))\). Moreover, this is also the first construction of \(\varepsilon\)-MSR codes with the help-by-transfer property. 
More precisely, given an \(\varepsilon > 0\) and fixed \(k < d\), we can construct an \((n,k,d,\ell)_\mathbb{F}\) \(\varepsilon\)-MSR code whose sub-packetization \(\ell\) scales logarithmically with length \(n\). Our construction  
utilizes two building blocks. The first is a new MDS array code construction, and the second is a code with good parameters, i.e., large size and distance. These two building blocks are combined together, as in \cite{RawTamGur_epsilonMSR1}, to yield the \(\varepsilon\)-MSR code construction. 
This construction is then generalized to obtain \((n,k=n-r,\ell)\) MDS array codes with the \((h,d)\) \(\varepsilon\)-optimal help-by-transfer property for all \((h,d)\) with \(h \le r\) and \(k \le d \le n-h\) simultaneously. For this code as well, the sub-packetization \(\ell\) scales logarithmically with length \(n\). 
 
	\section{ Construction of $\varepsilon$-MSR Codes} \label{sec:eps_msr}
We first define a group algebra which is needed to describe our construction. 
	\subsection{A Group Algebra} \label{sec:group_algebra}

 For a positive integer \(s\), let \(\mathbb{Z}_s = \{0, \ldots, s-1\}\) be the additive group of integers modulo \(s\), and let \(G = \mathbb{Z}_s^t\) be the abelian group formed by the \(t\)-fold direct product of \(\mathbb{Z}_s\).
Let \(\mathbb{F}\) be a finite field, and define the group algebra of \(G\) over \(\mathbb{F}\) by
\[
\mathbb{F}[G] = \left\{ \sum_{g \in G} a_g x_g \mid a_g \in \mathbb{F} \right\},
\]
which consists of all formal sums of \(x_g\), \(g \in G\), with coefficients from \(\mathbb{F}\). It is easy to see that \(\mathbb{F}[G]\) is a vector space over \(\mathbb{F}\) of dimension \(|G| = s^t\), spanned by the standard basis \(\{x_g \mid g \in G\}\). Addition in the algebra is defined coordinate-wise, i.e.,
\[
\sum_{g \in G} a_g x_g + \sum_{g \in G} b_g x_g := \sum_{g \in G} (a_g + b_g) x_g.
\]
Multiplication of two standard basis vectors \(x_v, x_u\) is defined as \(x_v \cdot x_u = x_{v+u}\), and this is then extended linearly to all of \(\mathbb{F}[G]\).
Note that since $G$ is a commutative group, it follows that the multiplication in the algebra is also commutative. 

Given an element \(f \in \mathbb{F}[G]\), one can define a linear transformation on the vector space \(\mathbb{F}[G]\) simply by multiplying by \(f\), i.e., \(h \mapsto h \cdot f\) for any \(h \in \mathbb{F}[G]\). This linear transformation can be represented via a matrix after selecting a basis, which we set to be the standard basis. Thus, each element \(f \in \mathbb{F}[G]\) can be viewed as an \(s^t \times s^t\) matrix over \(\mathbb{F}\).
In fact, the mapping that sends \(f \in \mathbb{F}[G]\) to its matrix representation is an algebra homomorphism between the algebra \(\mathbb{F}[G]\) and the algebra of matrices with the usual addition and multiplication, which is called the regular representation. Note that in particular, the matrix representation of $x_v\in \mathbb{F}[G]$ for $v\in G$ is an $s^t\times s^t$  permutation matrix. We proceed to give a simple example of a group algebra and a matrix representation of its elements.

	\begin{example}
	    Let \( G = \mathbb{Z}_2^2 = \{(0,0), (0,1), (1,0), (1,1)\} \), i.e., \( s = t = 2 \). Let \(\mathbb{F}\) be the binary field \(\mathbb{F}_2\). Then, \(\mathbb{F}[G]\) is a vector space over \(\mathbb{F}_2\) of dimension 4. Let \(\rho\) be the mapping that takes an element of the group algebra and maps it to its matrix representation under the standard basis, where we fix the following order of the standard basis: \( x_{(0,0)}, x_{(0,1)}, x_{(1,0)}, x_{(1,1)} \).

One can verify that
\begin{align*}
    \rho(x_{(0,0)}) &= \begin{pmatrix}
        1 & 0 & 0 & 0 \\
        0 & 1 & 0 & 0 \\
        0 & 0 & 1 & 0 \\
        0 & 0 & 0 & 1 
    \end{pmatrix},~ 
    \rho(x_{(0,1)}) = \begin{pmatrix}
        0 & 1 & 0 & 0 \\
        1 & 0 & 0 & 0 \\
        0 & 0 & 0 & 1 \\
        0 & 0 & 1 & 0 
    \end{pmatrix},~ 
    \rho(x_{(1,0)}) = \begin{pmatrix}
        0 & 0 & 1 & 0 \\
        0 & 0 & 0 & 1 \\
        1 & 0 & 0 & 0 \\
        0 & 1 & 0 & 0 
    \end{pmatrix},~ 
    \rho(x_{(1,1)}) = \begin{pmatrix}
        0 & 0 & 0 & 1 \\
        0 & 0 & 1 & 0 \\
        0 & 1 & 0 & 0 \\
        1 & 0 & 0 & 0 
    \end{pmatrix}.
\end{align*}

It can be verified that for all \( u, v \in G \), we have \(\rho(x_v \cdot x_u) = \rho(x_u) \cdot \rho(x_v)\) and 
\(\rho(x_v + x_u) = \rho(x_u) + \rho(x_v)\), i.e., \(\rho\) is a ring homomorphism. For instance,
\begin{align*}
    \rho(x_{(0,1)}) \cdot \rho(x_{(1,0)}) &= \begin{pmatrix}
        0 & 1 & 0 & 0 \\
        1 & 0 & 0 & 0 \\
        0 & 0 & 0 & 1 \\
        0 & 0 & 1 & 0 
    \end{pmatrix} \cdot \begin{pmatrix}
        0 & 0 & 1 & 0 \\
        0 & 0 & 0 & 1 \\
        1 & 0 & 0 & 0 \\
        0 & 1 & 0 & 0 
    \end{pmatrix} = \begin{pmatrix}
        0 & 0 & 0 & 1 \\
        0 & 0 & 1 & 0 \\
        0 & 1 & 0 & 0 \\
        1 & 0 & 0 & 0 
    \end{pmatrix} = \rho(x_{(1,1)}) = \rho(x_{(0,1)} \cdot x_{(1,0)}).
\end{align*}

Similarly,
\begin{align*}
    \rho(x_{(0,0)}) + \rho(x_{(0,1)}) &= \begin{pmatrix}
        1 & 0 & 0 & 0 \\
        0 & 1 & 0 & 0 \\
        0 & 0 & 1 & 0 \\
        0 & 0 & 0 & 1 
    \end{pmatrix} + \begin{pmatrix}
        0 & 1 & 0 & 0 \\
        1 & 0 & 0 & 0 \\
        0 & 0 & 0 & 1 \\
        0 & 0 & 1 & 0 
    \end{pmatrix} = \begin{pmatrix}
        1 & 1 & 0 & 0 \\
        1 & 1 & 0 & 0 \\
        0 & 0 & 1 & 1 \\
        0 & 0 & 1 & 1 
    \end{pmatrix} = \rho(x_{(0,0)} + x_{(0,1)}).
\end{align*}

	\end{example}
In the sequel, we will omit the notation $\rho$ and view each element of the group algebra as a matrix. 

Since \(\mathbb{F}[G]\) is a commutative ring, the following two properties hold:

\begin{enumerate}
    \item \(\mathbb{F}[G][X]\), which consists of all polynomials in the variable \(X\) with coefficients from \(\mathbb{F}[G]\), is also a commutative ring.
    \item For any \(a \in \mathbb{F}[G]\) and \(g(X) \in \mathbb{F}[G][X]\), the mapping 
   \[
   g(X) \mapsto g(a)
   \]
   is a ring homomorphism from \(\mathbb{F}[G][X]\) to \(\mathbb{F}[G]\) (called the evaluation homomorphism). Therefore, for any \(g(X), f(X) \in \mathbb{F}[G][X]\),
   \[
   (g(X) \cdot f(X))(a) = g(a) \cdot f(a).
   \]
\end{enumerate}

We proceed to give the construction of an MDS array code, which will serve as a building block in the construction of the $\varepsilon$-MSR code.
	\subsection{An MDS Array Code Construction} \label{sec:MDS_code}
For positive integers \(k < n \), \( s\leq n-k\) and \(t \le n\), we present an \((n,k,\ell=s^t)_{\mathbb{F}}\) MDS array code construction, where any failed node can be repaired by downloading information from any \(d=k+s-1\) helper nodes. Note that indeed $k\leq d<n$. The guarantees on the repair bandwidth are given in Theorem \ref{sec:MDS} below. The code is defined over the finite field \(\mathbb{F}\) of order \(q \ge gn + 1\), where \(g = \gcd(d - k + 1, q - 1)\). Since \(g \le d - k + 1\), it follows that \(q = O(n(d - k))\).
 

Set $r=n-k$ and let $\alpha$ be a primitive element of $\mathbb{F}$. 
 For $i\in[n]$ let~$\alpha_i\triangleq \alpha^{i-1}$,  and for~$i\in[t]$ let~$e_i$ be the $i$-th standard basis vector of $G$,  i.e., the all zeros vector of length $t$ except for the $i$-th entry which is $1.$
 
 We now present the code construction.  
 
Recall that we consider any element of~$\mathbb{F}[G]$ as an $s^t\times s^t$ square matrix over~$\mathbb{F}$. In particular, for $i \in [t]$, $x_{e_i}\in \mathbb{F}[G]$ is an $s^t\times s^t$  permutation matrix. 
\begin{construction}\label{construction:MDS}
    Fix~$(a_1,\ldots,a_n)\in\{e_1,\ldots,e_t\}^n$.  For~$i\in[r]$ and~$j\in[n]$, let
    $$A_{i,j} \coloneqq (\alpha_j\cdot x_{a_j})^{i-1}\in\mathbb{F}^{s^t\times s^t}.$$ The code's parity check matrix is $$A \coloneqq (A_{i,j})_{i \in [r],j\in [n]}\in \bF^{rs^t\times ns^t}.$$	
	\end{construction}

 Note that  Construction~\ref{construction:MDS} gives a family of codes, with each code corresponding to a specific choice of $\{a_j\}_{j \in [n]}$.  In the discussion below, we focus on an arbitrary code from this family.
 The following theorem summarizes the properties of the construction. 
 \begin{theorem}
 \label{sec:MDS}
     For every $a_1,\dots,a_n$, the code constructed in Construction \ref{construction:MDS} is an \((n,k=n-r,\ell=s^t)_\mathbb{F}\) MDS array code with the following repair property.  For any failed node $i\in [n]$ and a subset $D\subseteq [n]\backslash\{i\}$ of $d=k+s-1$ helper nodes, one can repair node $i$ by downloading from node $j\in D$ a fraction of $f_j$ of it data, where 
     $$f_j=\begin{cases}\frac{1}{s} & a_j\neq a_i\\ 
     1 & \text{else.}
     \end{cases}$$
     Moreover,
     the help-by-transfer property holds.
     \end{theorem}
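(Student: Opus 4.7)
My plan is to split the proof into verifying the MDS property and the repair property. For MDS, I view the parity-check matrix $A$ as a block Vandermonde matrix over the commutative group algebra $\mathbb{F}[G]$, with evaluation points $\{\alpha_j x_{a_j}\}_{j\in[n]}$. For any $S\subseteq[n]$ of size $r$, the block submatrix $A_S$ is invertible (as a matrix over $\mathbb{F}$) if and only if its ring-theoretic Vandermonde determinant $\prod_{j<j'\in S}(\alpha_{j'}x_{a_{j'}}-\alpha_j x_{a_j})$ is a unit in $\mathbb{F}[G]$. Factoring out the unit $\alpha_j x_{a_j}$ from each factor reduces the question to showing $c\cdot x_b-1$ is a unit for $c=\alpha^{j'-j}$ and $b=a_{j'}-a_j$. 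Passing to the cyclic subalgebra $\mathbb{F}[\langle x_b\rangle]\cong\mathbb{F}[Y]/(Y^{s'}-1)$ (with $s'$ the order of $x_b$, a divisor of $s$), this is equivalent to $c^{s'}\ne 1$, for which it suffices that $c^s\ne 1$. A short modular computation using $g=\gcd(s,q-1)$ shows $c^s\ne 1$ exactly when $(q-1)/g\nmid(j'-j)$, which is ensured by the assumption $q\ge gn+1$ since $|j'-j|\le n-1<(q-1)/g$.

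For the repair property, assume without loss of generality $a_i=e_m$, and set $P=[n]\setminus(D\cup\{i\})$ so that $|P|=r-s$. Define $\pi(X)=\prod_{l\in P}(X-\alpha_l x_{a_l})\in\mathbb{F}[G][X]$ of degree $r-s$, and combine the $r$ parity-check equations via the $s$ polynomial multipliers $p_v(X)=X^v\pi(X)$ ($v=0,\ldots,s-1$, each of degree at most $r-1$). Setting $\tilde{\mathbf{c}}_i=\pi(\alpha_i x_{a_i})\mathbf{c}_i$ and $\tilde{\mathbf{c}}_j=\pi(\alpha_j x_{a_j})\mathbf{c}_j$, this yields the $s$ reduced equations
\[
(\alpha_i x_{a_i})^v\tilde{\mathbf{c}}_i=-\sum_{j\in D}(\alpha_j x_{a_j})^v\tilde{\mathbf{c}}_j,\qquad v=0,\ldots,s-1,
\]
with the contributions of $l\in P$ annihilated by $\pi$. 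Since $\pi(\alpha_i x_{a_i})$ is a unit (by the MDS argument), recovering $\tilde{\mathbf{c}}_i$ is equivalent to recovering $\mathbf{c}_i$. Next, I decompose $\mathbb{F}^\ell=\bigoplus_{u\in\mathbb{Z}_s}W_u$, where $W_u=\mathrm{span}\{x_g:g_m=u\}$ is a \emph{coordinate} subspace of dimension $s^{t-1}$; here $x_{a_j}$ preserves each $W_u$ when $a_j\ne a_i$, while $x_{a_i}$ cyclically permutes them via $W_u\to W_{u+1}$. Projecting the $v$-th reduced equation onto $W_0$ yields the failed-node layer $\tilde{\mathbf{c}}_i^{(-v\bmod s)}$ on the LHS, and on the RHS only $\tilde{\mathbf{c}}_j^{(0)}$ ($s^{t-1}$ symbols) from each helper with $a_j\ne a_i$ together with all layers of $\tilde{\mathbf{c}}_j$ from each helper with $a_j=a_i$; as $v$ ranges over $\mathbb{Z}_s$, the receiver recovers every layer of $\tilde{\mathbf{c}}_i$, and hence $\mathbf{c}_i$.

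The main obstacle is securing the help-by-transfer property, i.e., showing that $\tilde{\mathbf{c}}_j^{(0)}$ for a helper with $a_j\ne a_i$ depends on only $s^{t-1}$ specific coordinates of $\mathbf{c}_j$. The factors of $\pi(\alpha_j x_{a_j})$ of the form $(\alpha_j x_{a_j}-\alpha_l x_{a_l})$ with $a_l\ne a_i$ preserve $W_0$, so they act locally on $\mathbf{c}_j|_{W_0}$; but factors with $a_l=a_i$ (i.e., non-helpers $l\in P$ of the same ``type'' as the failed node) introduce cross-layer mixing that must be dealt with. I would address this either by (a) replacing $\pi$ with a cleverly restructured polynomial that absorbs the same-type non-helpers into a scalar, using that the corresponding elements $\alpha_l x_{a_i}$ all commute inside the cyclic subalgebra $\mathbb{F}[\langle x_{a_i}\rangle]$ where an additional Vandermonde-type cancellation applies, or (b) exploiting that, after collating the $s$ projected equations into a single block system, the cross-layer terms cancel in the aggregate, so the resulting computation is coordinate-local and witnesses help-by-transfer.
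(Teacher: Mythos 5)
Your MDS argument and your repair mechanism coincide with the paper's: the block-Vandermonde determinant over the commutative algebra $\mathbb{F}[G]$, the reduction to showing $I-\beta x$ is invertible for $x$ of order $s$ and $\beta^s\neq 1$ (which is exactly the paper's Lemma~\ref{lem:alpha} plus the contradiction argument), the annihilator polynomial of the non-helpers multiplied by $X^v$ for $v=0,\dots,s-1$, and the coset decomposition $S\oplus Sx_{e_1}\oplus\cdots\oplus Sx_{e_1}^{s-1}=\mathbb{F}[G]$ are all the paper's steps. Up to the invertibility of the stacked system for $\mathbf{c}_i$ and the count of $s^{t-1}$ symbols from each helper of a different type, your proposal is sound and essentially identical to the paper's proof.

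The gap is the one you flag yourself and do not close: for a helper $j$ with $a_j\neq a_i$, what is actually needed is the row space of $S\,h(\alpha_jx_{a_j})$, and while the factors $(\alpha_jx_{a_j}-\alpha_lx_{a_l})$ with $a_l\neq a_i$ preserve $S$, the factors coming from non-helpers $l$ with $a_l=a_i$ do not. Neither of your escape routes works as stated. Route~(a) breaks the degree budget: the natural way to turn such a factor into a scalar at every evaluation point is to replace $(X-\alpha_lx_{a_i})$ by $(X^s-\alpha_l^s)$ (using $x_{a_j}^s=I$), but then $\deg\bigl(X^{s-1}h(X)\bigr)=r-1+w(s-1)$, where $w$ is the number of same-type non-helpers, so the resulting vectors leave the row span of $A$ whenever $w\geq1$. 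Route~(b) fails because the aggregate over $v$ of the required data from node $j$ is exactly $S\,h(\alpha_jx_{a_j})\mathbf{c}_j^\intercal$; writing $h=h_1h_2$ with $h_2(\alpha_jx_{a_j})=\sum_{c=0}^{w}\beta_c\,x_{c\,a_i+(w-c)a_j}$, the vectors $x_gh_2(\alpha_jx_{a_j})$ for $g$ ranging over the basis of $S$ have pairwise disjoint supports of size $w+1$ (for $w<s$), so nothing cancels: the space has dimension $s^{t-1}$ --- hence the $1/s$ download fraction survives --- but it is not a coordinate subspace, so help-by-transfer, and the claim that node $j$ sends literally $S\mathbf{c}_j^\intercal$, do not follow from this scheme when $w\geq1$. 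For comparison, the paper dispatches this point in one sentence by asserting that the expansion of $h(\alpha_jx_{a_j})$ ``does not contain a term with $x_{a_i}$,'' which is only literally true when no non-helper shares the failed node's type; so you have correctly located the delicate step, but a complete proof requires an argument here that your proposal does not supply.
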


 \begin{proof}
	    {\bf MDS Property:} 
		It suffices to show that any $r \times r$ block submatrix of $A$ is invertible. 
  Without loss of generality, it is shown that the leftmost $r\times r$ block submatrix, i.e., 
		\begin{align}
		    \begin{pmatrix}
			I &\cdots &I\\
			\alpha_1x_{a_1} & \cdots& \alpha_{n-k} x_{a_{n-k}}\\
			\vdots &\ddots &\vdots\\
			(\alpha_1x_{a_1})^{n-k-1} & \cdots& (\alpha_{n-k} x_{a_{n-k}})^{n-k-1}
		\end{pmatrix} 
  \label{eq:MDS}
		\end{align}
		is invertible. Since all the block matrices in the above matrix commute, it follows from \cite[Equation 1]{BenYaacov2014} and \cite[Thoerem 1]{Silvester2000} 
  that the determinant of \eqref{eq:MDS} equals 
		$$\prod_{1\leq i <j\leq r}\det (\alpha_ix_{a_i}- \alpha_jx_{a_j}).$$
		If $a_i=a_j$ for some $i,j \in [r]$, then clearly the matrix $\alpha_ix_{a_i}- \alpha_jx_{a_j}$ is invertible as by our selection $\alpha_i\neq \alpha_j$ and since $x_{a_i}$ is a permutation matrix. If $a_i\neq a_j$, then it suffices to show that  the matrix 
		$I-\alpha_j\alpha_i^{-1}x_{a_j}x_{a_i}^{-1}$ is invertible. It is easy to verify that the order of the matrix $x:=x_{a_j}x_{a_i}^{-1}$ is $s$, i.e., $s$ is the smallest positive integer $u$  for which  $x^u=I.$ 
		Set $\beta \coloneqq \alpha_j\alpha_i^{-1}$, then by Lemma~\ref{lem:alpha} in Appendix~\ref{appendix:alpha}, we have $\beta^s\neq 1$.  Assume towards a contradiction that there exists a nonzero vector $v$ for which $(I-\beta x)v=0$, equivalently, $\beta xv=v$, and by induction also $(\beta x)^sv=v$. However, $(\beta x)^sv=\beta^sx^sv=\beta^sIv=\beta^sv\neq v$, and we arrive at a contradiction. 
	
	\vspace{0.2cm}
	{\bf Repair  property:} 
				It can be seen that for $p \in [r]$, the $p$-th block row of the parity check matrix $A$ of the code is obtained by evaluating the polynomial $X^{p-1}\in \mathbb{F}[G][X]$ at the ring elements $\alpha_jx_{a_j}, j \in [n].$	
		Let $\overline{D}=[n]\backslash (\{i\}\cup D)$ be the set of non-helper nodes, which is of size $n-d-1$.  Let $h(X)\in \mathbb{F}[G][X]$ be the annihilator polynomial of the set $\overline{D}$, i.e., 
		$$h(X)=\prod_{j\in \overline{D}}(X-\alpha_jx_{a_j}).$$ 
		If $d=n-1$, then $\overline{D}$ is the  empty set, and we define $h(X)=I$.  		
		For $m=0,\ldots,s-1$, let $f_m(X)=X^mh(X)$. Clearly $\deg(f_m)\leq s-1+n-d-1=r-1.$ Hence, the  vector of matrices 
		$$(f_m(\alpha_1x_{a_1}),\ldots,f_m(\alpha_nx_{a_n}))$$
		is in the row span of $A$. 
		To see this, write  $f_m(X)=\sum_{j=0}^{r-1}b_jX^j$, where $b_j\in \mathbb{F}[G]$. 
 Referring to $A$ as an $r \times n$ matrix over $\mathbb{F}[G]$, we have
  $$(b_0,\ldots,b_{r-1})\cdot A=(\textstyle\sum_{j=0}^{r-1}b_j(\alpha_1x_{a_1})^j,\ldots,\sum_{j=0}^{n-k-1}b_j(\alpha_nx_{a_n})^j)=(f_m(\alpha_1x_{a_1}),\ldots,f_m(\alpha_nx_{a_n})).$$
		Since evaluation is a ring homomorphism, we have that 
$$(f_m(\alpha_1a_1),\ldots,f_m(\alpha_na_n))=((\alpha_1x_{a_1})^mh(\alpha_1x_{a_1}),\ldots,(\alpha_nx_{a_n})^mh(\alpha_nx_{a_n})).$$
In addition, since $h$ is the annihilator polynomial of $\overline{D}$, it follows that for any $j\in \overline{D}$, the $j$-th entry (which is a matrix) is the zero matrix.
 For a codeword $\boldc=(\boldc_1,\ldots,\boldc_n)$, we have $A(\boldc_1,\ldots,\boldc_n)^\intercal=\mathbf{0}$, and hence
		\begin{equation}
			\label{eq1}
			((\alpha_1x_{a_1})^mh(\alpha_1x_{a_1}),\ldots,(\alpha_nx_{a_n})^mh(\alpha_nx_{a_n}))\cdot (\boldc_1,\ldots,\boldc_n)^\intercal 
			= 0~~\text{ for }~m\in\{0,1,\ldots,s-1\}. 
		\end{equation}
Recall that ~$(a_1,\ldots,a_n)\in\{e_1,\ldots,e_t\}^n$. Assume without loss of generality that $a_i=e_1$, where $i\in[n]$ is the node to be repaired. 
Let $$S \coloneqq \text{span}\{x_g \vert g(1)=0\}$$
  be the subspace of $\mathbb{F}[G]$ spanned by the basis vectors $x_g$ for $g$'s whose first coordinate equals zero. Clearly, $\dim(S)=s^{t-1}$.
  		By abuse of notation, we also view \(S\) as a \(0,1\) matrix over $\mathbb{F}_q$ of order \(s^{t-1} \times s^t\) whose row span equals the subspace \(S\). Indeed, let the \(s^t\) columns of \(S\) be indexed by the elements of \(G\), and let each row of \(S\) correspond to one element in the basis of \(S\). More precisely, each row is a zero vector except for one coordinate which is one, corresponding to some element \(x_g\) with \(g(1) = 0\).
 
 For any $j\neq 1$, the subspace $S$ is an invariant subspace under the multiplication by  $x_{e_j}$, i.e., $$Sx_{e_j}=S$$ 
 Also, the $s$ subspaces $S,Sx_{e_1},\ldots,Sx_{e_1}^{s-1}$ form a direct sum of the vector space 
		$\mathbb{F}[G]$, i.e., 
		\begin{equation}
			\label{eq:direct_sum}
			S\oplus Sx_{e_1}\oplus\ldots\oplus Sx_{e_1}^{s-1}=\mathbb{F}[G].
		\end{equation}
The two statements given above about $S$ can be proved as follows. Let $G'$ be the subgroup of $G$ formed by elements whose first coordinate is zero. Since it contains $e_j$ for all $j \ne 1$, it follows that $G'+e_j=G'$, which implies the first statement. Since $s$ cosets of $G'$ are $G', G'+e_1, G'+2e_1,...,G'+(s-1)e_1$, the second statement follows.

		We claim that it is possible to recover the vector $\boldc_i$ using the $s$ equations in \eqref{eq1}. Indeed, by \eqref{eq1} we have 
	$$(\alpha_ix_{a_i})^mh(\alpha_ix_{a_i})\boldc_i^\intercal=-\sum_{j\neq i}(\alpha_jx_{a_j})^mh(\alpha_jx_{a_j})\boldc_j^\intercal ~~\text{ for }~m\in\{0,1,\ldots,s-1\}.$$
  By multiplying from the left  by $S$ we get
$$S(\alpha_ix_{a_i})^mh(\alpha_ix_{a_i})\boldc_i^\intercal=-\sum_{j\neq i}S(\alpha_jx_{a_j})^mh(\alpha_jx_{a_j})\boldc_j^\intercal~~\text{ for }~m\in\{0,1,\ldots,s-1\}$$
Since $h(\alpha_jx_{a_j})=0$ for $j\in \overline{D}$, we have 
\begin{align}
 S(\alpha_ix_{a_i})^mh(\alpha_ix_{a_i})\boldc_i^\intercal=
		-\sum_{j\in D}S(\alpha_jx_{a_j})^mh(\alpha_jx_{a_j})\boldc_j^\intercal~~\text{ for }~m\in\{0,1,\ldots,s-1\}. 
  \label{eq:recovery}
\end{align}
		We now show that if we get enough information from the nodes in $D$ such that we can construct the RHS of the above $s$ equations, then we can recover the lost vector $\boldc_i$. 
  Since $a_i=e_1$ and $\alpha_i \in \mathbb{F} \setminus \{0\} $, it follows from \eqref{eq:direct_sum} that
  $$S \oplus S(\alpha_ix_{a_i})\oplus\ldots\oplus S(\alpha_ix_{a_i})^{s-1}=\mathbb{F}[G].$$
  One can verify that $h(\alpha_ix_{a_i})$ is an   invertible matrix; this is true since $\alpha_ix_{a_i}-\alpha_jx_{a_j}$ is invertible whenever $i \ne j$ (follows from the MDS property proof) and $h(\alpha_ix_{a_i})$ is a product of such matrices.
  Therefore, we also have that
\begin{align}
    Sh(\alpha_ix_{a_i})\oplus S(\alpha_ix_{a_i})h(\alpha_ix_{a_i})\oplus\ldots\oplus S(\alpha_ix_{a_i})^{s-1}h(\alpha_ix_{a_i})=\mathbb{F}[G],  
    \label{eq:direct_sum_final}  
\end{align}
which implies that we can recover node $i$ using the $s$ equations. 
This can be argued as follows. It follows from \eqref{eq:direct_sum_final} that the $s$ many $s^{t-1} \times s^t$ matrices which multiply $\boldc_i^\intercal$ from the left in the LHS of \eqref{eq:recovery}, when stacked on top of each other, form an invertible matrix. Hence, one can use \eqref{eq:recovery} to compose an invertible linear system whose solution $\boldc_i$ can be easily computed.  
		
		Next, we calculate the amount of information needed from the nodes in $D$ to compute the RHS of $s$ equations in \eqref{eq:recovery}. Fix $j\in D$, then for the $s$ equations to be constructed, we will need the following information from node $j$:
		$$S(\alpha_jx_{a_j})^mh(\alpha_jx_{a_j})\boldc_j^\intercal \text{ for }~ m\in\{0,1,\ldots,s-1\}.$$
		However, as noted earlier, if $a_j\neq a_i$ then $S$ is an invariant subspace of the operator $x_{a_j}$, which means that only $Sh(\alpha_jx_{a_j})\boldc_j^\intercal$ is needed from helper node $j$.  Furthermore, the subspace $S$ is also an invariant subspace under the multiplication by $h(\alpha_jx_{a_j})$, as expansion of $h(\alpha_jx_{a_j})$ does not contain a term with $x_{a_i}$. Therefore, helper node $j$ with $a_j\neq a_i$ only needs to send  $S\boldc_j^\intercal$, which is exactly $1/s$ fraction of the information it stores. 
		Lastly, if $a_j=a_i$, then by \eqref{eq:direct_sum_final}, we see that all the information stored in node $j$ is needed. Clearly, the help-by-transfer property follows since all rows of $S$ are unit vectors.
	\end{proof} 	
	\begin{remark}
		\normalfont In Construction~\ref{construction:MDS}, if we set \(t = n\) and \(a_j = e_j\) for all \(j \in [n]\), then the resulting code is an \((n,k,d,\ell = s^n)_{\mathbb{F}}\) MSR code with the help-by-transfer property. We remark that this code has some structural similarity with an MSR code from \cite{YeBar_1}, which uses generalized permutation matrices to construct the parity-check matrix.
	\end{remark}
	\subsection{Transformation to $\varepsilon$-MSR Codes}
	\label{sec:eps_msr_transformation}
In the previous section, we constructed an \((n,k,\ell=s^t)\) MDS array code, which will be the first ingredient in the construction of an \(\varepsilon\)-MSR code. The second ingredient is a large code with a large normalized minimum distance. More precisely, given a code \(\mathcal{U} = \{\boldsymbol{u}^{(1)}, \ldots, \boldsymbol{u}^{(n)}\}\) over the alphabet \([t]\) with block length \(\lambda\), size \(n\), and normalized minimum distance at least \(\delta\), we construct an \((n,k,d,\ell=\lambda s^t)_{\mathbb{F}}\) \(\varepsilon\)-MSR code with $\epsilon=(1-\delta)(s-1)$ as follows. 
 We use the notation \(\boldsymbol{u}^{(j)} = (u^{(j)}_1, \dots, u^{(j)}_{\lambda})\) to represent the symbols of the \(j\)-th codeword.

\begin{construction} \label{construction:eps_MSR} 
For each codeword \(\boldsymbol{u}^{(j)}\in\mathcal{U}\) define a block diagonal matrix \(H_j\in\mathbb{F}^{\lambda s^t\times\lambda s^t}\), with~$\lambda$ blocks of size~$s^t\times s^t$, where the~$b$-th block equals~$\alpha_j x_{e_{m}}$ with~$m=u_b^{(j)}$, and with~$\alpha_j=\alpha^{j-1}$ as in Construction \ref{construction:MDS}. The code's parity check matrix is $$H = (H_j^{i-1})_{i \in [r], j \in [n]}\in \mathbb{F}^{r \lambda s^t \times n \lambda s^t}.$$
\end{construction}
 See Figure~\ref{fig:example} for an example of Construction~\ref{construction:eps_MSR}. 
Each node stores a vector over $\mathbb{F}$ of length $\lambda s^t$, and each such vector can be partitioned into $\lambda$ chunks of length $s^t$. More formally, 
the vector stored on node $j$ can be written as $(\boldc_{j,1}, \boldc_{j,2}, \dots, \boldc_{j,\lambda}) \in \mathbb{F}^{\lambda s^t}$ 
with each chunk $\boldc_{j,b} \in \mathbb{F}^{s^t}$. 
It follows from the structure of $H$ that  
the following observation holds. 
\begin{observation} \label{obsv:independent_encodings}
  For every $b \in [\lambda]$, $(\boldc_{1,b}, \boldc_{2,b}, \dots, \boldc_{n,b})$ is a codeword of the code obtained by setting by $a_j=e_{u^{(j)}_b}$ for all $j \in [n]$ in Construction~\ref{construction:MDS}. Thus we have $\lambda$ independent encodings, with each encoding corresponding to some particular choice of $(a_1,\ldots,a_n)\in\{e_1,\ldots,e_t\}^n$ in Construction~\ref{construction:MDS}.
\end{observation}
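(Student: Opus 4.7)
The plan is to exploit the block-diagonal structure of each $H_j$ and the fact that block-diagonal matrices form a subalgebra closed under powers. Specifically, since $H_j$ is block-diagonal with $\lambda$ blocks of size $s^t \times s^t$ whose $b$-th block equals $\alpha_j x_{e_{u_b^{(j)}}}$, any power $H_j^{i-1}$ is again block-diagonal with $b$-th block $(\alpha_j x_{e_{u_b^{(j)}}})^{i-1}$. Therefore, multiplying $H_j^{i-1}$ by the column vector $\boldc_j^\intercal=(\boldc_{j,1},\ldots,\boldc_{j,\lambda})^\intercal$ does not mix chunks: its $b$-th chunk is exactly $(\alpha_j x_{e_{u_b^{(j)}}})^{i-1}\boldc_{j,b}^\intercal$.

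First, I would write the parity-check relation $H\boldc^\intercal=\mathbf{0}$ row block by row block. The $i$-th block row of $H$ is $(H_1^{i-1},\ldots,H_n^{i-1})$, so applying it to $\boldc^\intercal$ yields $\sum_{j=1}^n H_j^{i-1}\boldc_j^\intercal$. By the preceding paragraph, this vector splits chunk-wise into $\lambda$ pieces, where the $b$-th piece is $\sum_{j=1}^n (\alpha_j x_{e_{u_b^{(j)}}})^{i-1}\boldc_{j,b}^\intercal$. Hence $H\boldc^\intercal=\mathbf{0}$ is equivalent to the simultaneous vanishing of these pieces for every $i\in[r]$ and every $b\in[\lambda]$.

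Fixing $b\in[\lambda]$ and comparing with Construction~\ref{construction:MDS}, the matrix $A_{i,j}=(\alpha_j x_{a_j})^{i-1}$ becomes precisely $(\alpha_j x_{e_{u_b^{(j)}}})^{i-1}$ upon setting $a_j=e_{u_b^{(j)}}$. Thus, for this choice of $(a_1,\ldots,a_n)$, the system $\sum_{j=1}^n (\alpha_j x_{e_{u_b^{(j)}}})^{i-1}\boldc_{j,b}^\intercal=0$ for $i\in[r]$ is exactly the parity-check equation $A(\boldc_{1,b},\ldots,\boldc_{n,b})^\intercal=\mathbf{0}$ of the Construction~\ref{construction:MDS} instance, proving that $(\boldc_{1,b},\ldots,\boldc_{n,b})$ is one of its codewords. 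Since $b$ was arbitrary, the encodings across the $\lambda$ chunks are independent.

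There is no real obstacle here; the argument is essentially bookkeeping built on the elementary fact that taking powers and multiplying vectors respect the block-diagonal decomposition. The only point requiring care is aligning indices: the $b$-th diagonal block of $H_j^{i-1}$ must be matched with the $b$-th chunk $\boldc_{j,b}$ of node $j$, which follows directly from the definition of $H_j$ in Construction~\ref{construction:eps_MSR}.
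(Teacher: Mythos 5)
Your proposal is correct and matches the paper's (implicit) reasoning: the paper simply asserts the observation "follows from the structure of $H$," and your chunk-wise decomposition of the block-diagonal powers $H_j^{i-1}$ is exactly the bookkeeping that justifies it. Nothing is missing.
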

 It follows from Theorem~\ref{sec:MDS} that each of these $\lambda$ encodings has MDS property. Therefore, the code with $H$ as parity check matrix is an $(n,k,\ell=\lambda s^t)_{\mathbb{F}}$ MDS array code. 
 		\begin{figure*}[ht!]
		\begin{center}
			\includegraphics[scale=0.5]{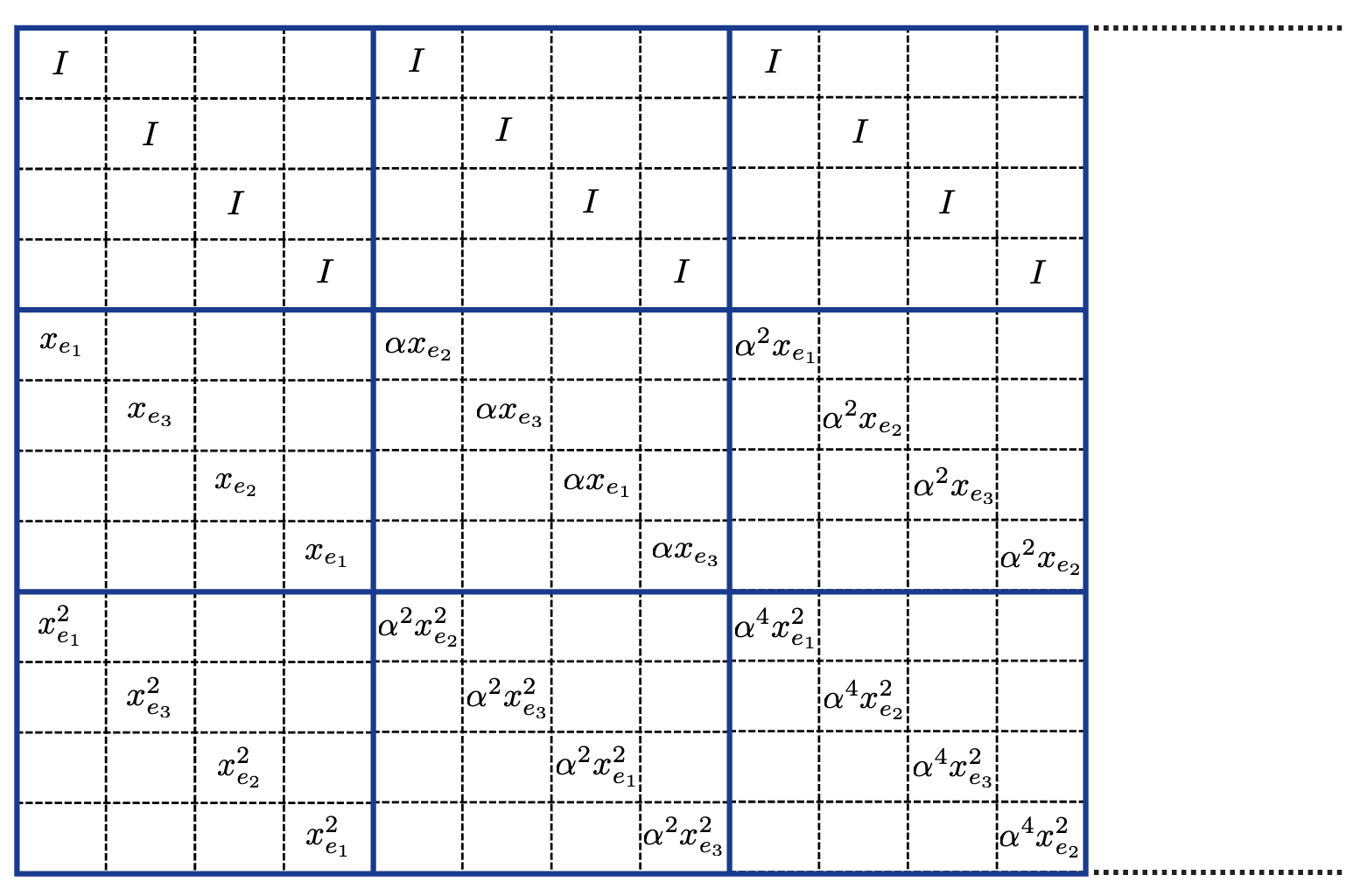}
			\caption{An illustration of the parity check matrix given by Construction~\ref{construction:eps_MSR}, for $r=3$, $\lambda=4$ and $t=3$, where $\boldu^{(1)}=(1,3,2,1)$, $\boldu^{(2)}=(2,3,1,3)$ and $u^{(3})=(1,2,3,2)$ are three codewords of $\cU$. Only the columns corresponding to the three nodes indexed by these three codewords are depicted in the figure.}
			\label{fig:example}
		\end{center}
	\end{figure*} 
 

We now show that Construction~\ref{construction:eps_MSR} yields an $(n,k,d, \ell)_\mathbb{F}$ $\varepsilon$-MSR code with $\varepsilon=(1-\delta)(s-1)$.
	\begin{lemma} 
 \label{lem:eps}
 Let  $\varepsilon=(1-\delta)(s-1)$.
	    For the code given by Construction~\ref{construction:eps_MSR}, any failed node can be repaired by downloading at most $(1+\varepsilon)\frac{\ell}{s}=(1+\varepsilon)\frac{\ell}{d-k+1}$ symbols over $\mathbb{F}$ from each of any $d$ chosen helper nodes. 
	\end{lemma}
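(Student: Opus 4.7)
The plan is to leverage Observation~\ref{obsv:independent_encodings} to decompose the repair problem into~$\lambda$ independent sub-problems, apply Theorem~\ref{sec:MDS} to each of them with the \emph{same} helper set $D$, and then control the total download from any fixed helper node $j \in D$ using the minimum distance of~$\cU$.

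More concretely, fix a failed node $i \in [n]$ and any set $D \subseteq [n]\setminus\{i\}$ with $|D|=d$. By Observation~\ref{obsv:independent_encodings}, for each $b\in[\lambda]$ the vector $(\boldc_{1,b},\ldots,\boldc_{n,b})$ is a codeword of the MDS array code from Construction~\ref{construction:MDS} instantiated with $a_j = e_{u_b^{(j)}}$ for all $j\in[n]$. In particular, $a_i$ in the $b$-th sub-code is $e_{u_b^{(i)}}$. Applying the repair procedure from Theorem~\ref{sec:MDS} to each of the $\lambda$ sub-codes with helper set $D$ recovers $\boldc_{i,b}$ for every $b$, and hence the entire content of node $i$. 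Moreover, within the $b$-th sub-problem the download from helper $j\in D$ is $s^{t-1}$ symbols when $u_b^{(j)} \neq u_b^{(i)}$ and $s^t$ symbols when $u_b^{(j)} = u_b^{(i)}$, and each such download is by help-by-transfer.

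For fixed $j\in D$ let $\omega_j \coloneqq |\{b\in[\lambda] : u_b^{(j)} = u_b^{(i)}\}| = \lambda - d_H(\boldu^{(j)},\boldu^{(i)})$. Summing the per-chunk costs gives a total download from node $j$ of
\[
\omega_j \cdot s^t + (\lambda - \omega_j)\cdot s^{t-1} \;=\; s^{t-1}\bigl(\lambda + \omega_j(s-1)\bigr).
\]
Since the normalized minimum distance of $\cU$ is at least $\delta$, we have $d_H(\boldu^{(j)},\boldu^{(i)}) \ge \delta\lambda$, hence $\omega_j \le (1-\delta)\lambda$. Plugging in and using $\ell = \lambda s^t$,
\[
s^{t-1}\bigl(\lambda + \omega_j(s-1)\bigr) \;\le\; s^{t-1}\lambda\bigl(1 + (1-\delta)(s-1)\bigr) \;=\; (1+\varepsilon)\,\frac{\ell}{s},
\]
which is the claimed bound, and $\tfrac{\ell}{s} = \tfrac{\ell}{d-k+1}$ by the choice $d=k+s-1$.

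The only genuinely nontrivial step is the reduction to~$\lambda$ independent instances of Construction~\ref{construction:MDS}, which is precisely what Observation~\ref{obsv:independent_encodings} already handles. Everything else is a counting argument combining the per-chunk costs from Theorem~\ref{sec:MDS} with the distance of~$\cU$, so I do not anticipate a substantial obstacle; the only thing to double-check is that the same helper set $D$ may indeed be used across all $\lambda$ chunks simultaneously, which is immediate because Theorem~\ref{sec:MDS} places no restriction on $D$ beyond $|D|=d$.
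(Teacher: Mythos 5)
Your proof is correct and follows essentially the same route as the paper's: decompose into the $\lambda$ independent encodings via Observation~\ref{obsv:independent_encodings}, invoke the per-chunk repair costs from Theorem~\ref{sec:MDS}, and bound the number of ``bad'' chunks (where $u_b^{(j)}=u_b^{(i)}$) by $(1-\delta)\lambda$ using the minimum distance of $\cU$. The only cosmetic difference is that you track the exact count $\omega_j$ per helper before bounding it, whereas the paper plugs in the worst case directly.
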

 \begin{proof} 
 Assume node $i$ needs to be repaired using the $d$ helper nodes indexed by $D \subseteq [n]\backslash \{i\}$. The $s^t$ symbols of node $i$ corresponding to each of the $\lambda$ encodings (see Observation~\ref{obsv:independent_encodings}) can be repaired independently. By the repair property in Theorem \ref{sec:MDS}, for the repair of the $s^t$ symbols of node $i$ which are involved in $b$-th encoding, node $j \in D$ transmits $s^{t-1}$ symbols if $u^{(j)}_b \ne u^{(i)}_b$ and otherwise it transmits all the $s^t$ symbols corresponding to that encoding. 
	In either case, node $j$ accesses only the symbols it transmits. 
	If $j \notin D$, then node $j$ does not send any information. 
 
 Since the minimum distance of $\cU$ is  at least $ \delta \lambda$, for every $j\ne i$ we have $u^{(j)}_b \ne u^{(i)}_b$ for at least $\delta \lambda$ values of $b \in [\lambda]$. Equivalently, in at most $(1-\delta)\lambda$ encodings, the entire information ($s^t$ symbols) is sent by the helper node $j$. If $j \notin D$, then node $j$ does not participate in the repair and does not send any information.
	In total, the download from each helper node is at most 
	$$\delta \lambda s^{t-1}+(1-\delta)\lambda s^t= \lambda s^{t-1}+ (1-\delta)\lambda(s^t- s^{t-1})=\frac{\ell}{s}(1+(1-\delta)(s-1)).$$
	Hence, for $\varepsilon=(1-\delta)(s-1)$, the code given by Construction~\ref{construction:eps_MSR} is an $(n,k,d=k+s-1,\ell=\lambda s^t)_{\mathbb{F}}$ $\varepsilon$-MSR code with help-by-transfer property. 
 \end{proof}
The next theorem easily follows from Lemma~\ref{lem:eps}.
\begin{theorem}
    The code defined in Construction \ref{construction:eps_MSR} is an $(n,k,d=k+s-1,\ell=\lambda s^t)_\mathbb{F}$ $\varepsilon$-MSR code with $\varepsilon=(1-\delta)(s-1)$.
\end{theorem}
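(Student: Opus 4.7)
The plan is to assemble the theorem directly from the two main ingredients already established in the section, namely Observation~\ref{obsv:independent_encodings} and Lemma~\ref{lem:eps}, with only a short argument needed to verify that the code has the MDS property. Concretely, an $\varepsilon$-MSR code requires two things: that the underlying code be an $(n,k,\ell)_\mathbb{F}$ MDS array code, and that it admit the $\varepsilon$-optimal repair property. The second of these is exactly the content of Lemma~\ref{lem:eps}, so the theorem reduces to checking the MDS property, together with pointing out that the value of~$\ell$, the repair degree~$d$, and the stated value of~$\varepsilon$ match the claim.

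My first step would be to parse each codeword $\boldc = (\boldc_{1},\ldots,\boldc_{n})\in \mathbb{F}^{n\lambda s^t}$ into its $\lambda$ chunks $\boldc_{j}=(\boldc_{j,1},\ldots,\boldc_{j,\lambda})$, exactly as in the discussion preceding Observation~\ref{obsv:independent_encodings}. Because each $H_j$ is block diagonal with $\lambda$ blocks of size $s^t\times s^t$, any power $H_j^{i-1}$ is also block diagonal with the same block partition. Consequently, the parity-check equations decouple across the $\lambda$ chunks, and for each fixed $b\in[\lambda]$, the vector $(\boldc_{1,b},\ldots,\boldc_{n,b})$ is a codeword of the Construction~\ref{construction:MDS} code with $a_j = e_{u^{(j)}_b}$, which is exactly Observation~\ref{obsv:independent_encodings}.

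Next I would use Theorem~\ref{sec:MDS}, which guarantees that each of these $\lambda$ codes is an $(n,k,s^t)_{\mathbb{F}}$ MDS array code, meaning that from any $k$ nodes one can recover all~$n$ chunks in the $b$-th layer. Since this holds for every $b\in[\lambda]$, the same subset of~$k$ nodes recovers every chunk of every node, so the entire codeword $\boldc$ is determined by any~$k$ of the $\boldc_j$'s. This establishes that the code in Construction~\ref{construction:eps_MSR} is an $(n,k,\ell=\lambda s^t)_{\mathbb{F}}$ MDS array code.

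Finally, I would invoke Lemma~\ref{lem:eps} verbatim to conclude that this MDS array code additionally satisfies the $\varepsilon$-optimal repair property with $d=k+s-1$ and $\varepsilon=(1-\delta)(s-1)$, and hence meets the definition of an $(n,k,d,\ell)_\mathbb{F}$ $\varepsilon$-MSR code with the stated parameters. No step should present a serious obstacle since the hard technical content, both the MDS verification in Theorem~\ref{sec:MDS} and the bandwidth accounting in Lemma~\ref{lem:eps}, has already been done; the only mild care needed is to observe that the block-diagonal structure of $H$ guarantees that the $\lambda$ layers decouple, so that the per-layer MDS property lifts to an MDS property for the full code.
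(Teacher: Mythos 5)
Your proposal is correct and follows essentially the same route as the paper: the paper likewise derives the MDS property by noting (Observation~\ref{obsv:independent_encodings}) that the block-diagonal structure decouples the code into $\lambda$ independent Construction~\ref{construction:MDS} codes, each MDS by Theorem~\ref{sec:MDS}, and then states that the theorem follows from Lemma~\ref{lem:eps}. Your write-up merely makes explicit the short lifting argument that the paper leaves implicit.
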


 \subsection{Existence of  $\varepsilon$-MSR Codes with Logarithmic Sub-packetization Level} \label{sec:existence}
	Here we show that our construction leads to an  $(n,k=n-r,d=k+s-1,\ell)_{\mathbb{F}}$ $\varepsilon$-MSR codes with sub-packetization level $\ell$ scaling logarithmically with $n$, for any fixed  $k< d $. Equivalently, for any fixed $k<d$  the length of the code $n$ scales exponentially with the sub-packetization level $\ell.$

 Given $\varepsilon>0$, let $\delta=1-\frac{\varepsilon}{d-k}=1-\frac{\varepsilon}{s-1}$ and choose a positive integer $t>\frac{s-1}{\varepsilon}$. 
	Then, by  the Gilbert-Varshamov (GV) bound \cite{ecc_Mac_Slo}  there exists a code over the alphabet $[t$] with block length $\lambda$, relative minimum distance at least $\delta$ and code size  at least $t^{\lambda(1-h_t(\delta)-o(1))}$, where 
   $h_t(x)=x\log_t(t-1)-x\log_tx-(1-x)\log_t(1-x)$ is the $t$-ary entropy function. 
	Let  $\cU$  be such a code, then  since~$\ell=\lambda s^t$, it follows that Construction \ref{construction:eps_MSR} yields an $\varepsilon$-MSR code with length $n= t^{((1-h_t(\delta)-o(1))/s^t)\ell}$. For constant $s$ and $t$, this gives $n=\Omega(\exp(\theta \ell))$ for some constant $\theta>0$. Such a code can be constructed over any finite field of size greater than $sn$. Thus, we have the following result. 
	\begin{theorem} \label{thm:eps_msr}
		Given positive integers $r \ge s $ and  $\varepsilon >0$, there is a constant $\theta=\theta(s,\varepsilon)$ such that for infinite values of  $\ell$ there exist  $(n=\Omega(\exp(\theta\ell)),k=n-r,d=k+s-1,\ell)_\mathbb{F}$ $\varepsilon$-MSR codes with help-by-transfer property. Moreover, the required field size $|\mathbb{F}|$ scales linearly with $n$, for constant $s$ . 
	\end{theorem}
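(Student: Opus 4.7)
The plan is to feed Construction~\ref{construction:eps_MSR} an outer code $\mathcal{U}$ coming from the Gilbert--Varshamov bound. Given $\varepsilon>0$, I would first set $\delta = 1 - \varepsilon/(s-1)$ so that Lemma~\ref{lem:eps} produces an $\varepsilon$-MSR code of the required type, and then fix any positive integer $t > (s-1)/\varepsilon$. This choice of $t$ is precisely what is needed to make the GV guarantee nontrivial: one has $\delta < 1 - 1/t$, and since the $t$-ary entropy function $h_t$ attains its unique maximum value $1$ at $x=1-1/t$, it follows that $h_t(\delta)<1$ strictly. This strict inequality is the only nontrivial analytic point in the whole argument.

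By the Gilbert--Varshamov bound, for every sufficiently large block length $\lambda$ there exists a code $\mathcal{U}$ over the alphabet $[t]$ of length $\lambda$, relative minimum distance at least $\delta$, and size at least $t^{\lambda(1-h_t(\delta)-o(1))}$. Plugging such a $\mathcal{U}$ into Construction~\ref{construction:eps_MSR} with the chosen $s$ and $t$ yields, by the preceding theorem, an $(n,\,k=n-r,\,d=k+s-1,\,\ell=\lambda s^t)_{\mathbb{F}}$ $\varepsilon$-MSR code with the help-by-transfer property, where $n=|\mathcal{U}|$.

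Since $s$ and $t$ depend only on $s$ and $\varepsilon$ they are constants, and $\lambda=\ell/s^t$, so one obtains
\[
n \;\geq\; t^{\lambda(1-h_t(\delta)-o(1))} \;=\; \Omega(\exp(\theta\ell)), \qquad \theta \;=\; \frac{(\ln t)\,(1-h_t(\delta))}{s^t} \;>\; 0,
\]
as $\lambda$ (and hence $\ell$) ranges over infinitely many values. For the field size: Construction~\ref{construction:MDS} (which underlies Construction~\ref{construction:eps_MSR}) requires $q\geq gn+1$ with $g=\gcd(s,q-1)\leq s$, so a field of size $O(sn)=O(n)$ suffices whenever $s$ is constant. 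The main obstacle, such as it is, lies in the first paragraph: the entire argument collapses if $h_t(\delta)=1$, so one must verify that the threshold $t>(s-1)/\varepsilon$ is exactly what pushes $\delta$ strictly below the entropy-maximizer $1-1/t$; every remaining step is a routine substitution into results already established earlier in the paper.
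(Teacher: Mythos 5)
Your proposal is correct and follows essentially the same route as the paper: pick $\delta=1-\varepsilon/(s-1)$ and $t>(s-1)/\varepsilon$, invoke the Gilbert--Varshamov bound to get the outer code $\mathcal{U}$, and feed it into Construction~\ref{construction:eps_MSR}, with the field size bound inherited from Construction~\ref{construction:MDS}. Your added observation that the threshold on $t$ is exactly what forces $\delta<1-1/t$ and hence $h_t(\delta)<1$ is a correct and welcome elaboration of a point the paper leaves implicit.
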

	Explicit constructions of codes achieving the GV bound are not known in general. To obtain explicit constructions of $\varepsilon$-MSR codes, one can choose $\cU$ to be the best-known explicit codes. In particular, for a special case, explicit constructions of codes that surpass the GV bound are known. 
\begin{remark} \normalfont \label{remark:AG}
	If $t \ge 49$ is the square of a prime power,  then explicit constructions of Algebraic Geometry (AG) codes over the alphabet $[t]$ beating the GV bound are known \cite{garcia1995tower,garcia1996asymptotic,ShumAKSD01}. 
	For block lenght $\lambda$ and relative minimum distance $\delta$, AG code of size $ t^{\lambda\big(1-\delta-\frac{1}{1-\sqrt{t}}-o(1)\big)}$ can be explicitly constructed. If we choose $\cU$ to be this AG code, then we get an $\varepsilon$-MSR code with $n= t^{(1-\delta-\frac{1}{1-\sqrt{t}}-o(1))/s^t)\ell}$. 
	For constant $s$ and $t$, this gives $n=\Omega(\exp(\theta' \ell))$, for some constant $\theta'$. If $t$ satisfies the above requirements, then $\theta'$ is larger 
 than the constant $\theta$ given by the GV bound.   
\end{remark}
	\section{Construction of MDS Array Codes with $(h,d)$ $\varepsilon$-optimal repair property property} \label{sec:simul}
	In this section, we generalize our $\varepsilon$-MSR construction to obtain $(n,k=n-r,\ell)_{\mathbb{F}}$ MDS array codes with $(h,d)$ $\varepsilon$-optimal repair property simultaneously for all $1 \le h \le r$ and $k \le d \le n-h$. 
For~$\zeta = \operatorname{lcm}(1,2,\ldots,r)$, let $G=\mathbb{Z}_\zeta^t$, and let~$\mathbb{F}[G]$ be the group algebra of~$G$ over~$\mathbb{F}$.
 Similar to Section~\ref{sec:group_algebra}, $\{x_g \vert g \in G\}$ is a basis of $\mathbb{F}[G]$ over $\mathbb{F}$ and we view each element of $\mathbb{F}[G]$ as an $\zeta^t\times \zeta^t$ matrix over $\mathbb{F}$. 
 Furthermore, suppose that $|\mathbb{F}| \coloneqq q \ge \gcd(\zeta,q-1)n+1.$
 Similar to Section~\ref{sec:MDS_code}, for a primitive element~$\alpha$ of~$\mathbb{F}$ we let~$\alpha_j\triangleq \alpha^{j-1}$, and for~$i\in[t]$ we let~$e_i$ be the~$i$-th standard basis vector of~$G$.
 

	Given a code $\cU=\{\boldu^{(1)},\ldots,\boldu^{(n)}\}$ over $[t]$ of block length $\lambda$, size $n$, and normalized minimum distance $\ge \delta$, we construct an $(n,k,\ell=\lambda \zeta^t)_{\mathbb{F}}$ MDS code as follows.
 We use the notation \(\boldsymbol{u}^{(j)} = (u^{(j)}_1, \dots, u^{(j)}_{\lambda})\) to represent the symbols of the \(j\)-th codeword.
 We note that the only difference between this construction and Construction~\ref{construction:eps_MSR} in the previous section is that $s$ is replaced by $\zeta$. 

\begin{construction} \label{construction:MDS_simul} 
For each codeword \(\boldsymbol{u}^{(j)}\in\mathcal{U}\) define a block diagonal matrix \(H_j\in\mathbb{F}^{\lambda \zeta^t\times\lambda \zeta^t}\), with~$\lambda$ blocks of size~$\zeta^t\times \zeta^t$, where the~$b$-th block equals~$\alpha_j x_{e_{m}}$ with~$m=u_b^{(j)}$, and with~$\alpha_j=\alpha^{j-1}$ as in Construction \ref{construction:MDS}. The code's parity check matrix is $$H = (H_j^{i-1})_{i \in [r], j \in [n]}\in \mathbb{F}^{r \lambda \zeta^t \times n \lambda \zeta^t}.$$
\end{construction}	
 


	
	The content of each node can be partitioned into $\lambda$ chunks of length $\zeta^t$. It follows from the structure of $H$ that the $b$-th chunks from all the $n$ nodes form a codeword of an $(n,k,\ell=\zeta^t)_{\mathbb{F}}$ array code (similar to Observation~\ref{obsv:independent_encodings}). Using arguments similar to those in the proof of Theorem \ref{sec:MDS}, it can be shown that each of these $\lambda$ encodings have MDS property. Therefore, the code with $H$ as a parity check matrix is an $(n,k,\ell=\lambda \zeta^t)_{\mathbb{F}}$ MDS array code. 
	
	\subsection{Repair Property}
	We now prove that the code has the $(h,d)$ $\varepsilon$-optimal repair property, for all $1 \le h \le r$ and all $k \le d \le n-h$ simultaneously. For $(h=r,d=k)$, this follows trivially from the MDS property.
 Fix any arbitrary $h$ and $d$ satisfying $1 \le h \le r-1$ and $k+1 \le d \le n-h$. Pick any $\cF \subseteq [n]$ with $|\cF|=h$ and $D \subseteq [n] \setminus \cF$ with $|D|=d$. 
	Assume that the $h$ nodes indexed by $\cF$ have failed and these need to be repaired using the $d$ helper nodes indexed by $D \subseteq [n]\backslash \cF$. For each failed node, the $\zeta^t$ symbols corresponding to each of the $\lambda$ encodings can be repaired independently of the other symbols. 
	
	\begin{proposition} \label{prop} Fix some $b \in [\lambda]$ and let $a_{j,b} = e_{u^{(j)}_b}$ for all $j \in [n]$. The failed node symbols involved in the $b$-th encoding ($\zeta^t$ symbols per failed node) can be repaired using the following amount of helper information. Every helper node $j \in D$ transmits at most $\frac{h\zeta^{t}}{d-k+h}$ symbols if $a_{j,b} \notin \{a_{i,b} \mid i \in \cF\}$ and otherwise it transmits $\zeta^t$ symbols. In either case, node $j \in D$ accesses only the symbols it transmits. If $j \notin D$, then node $j$ does not send any information. 
	\end{proposition}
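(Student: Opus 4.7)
The plan is to mirror the proof of Theorem~\ref{sec:MDS}, generalizing the annihilator-polynomial argument from one failure to $h$ simultaneous failures. Fix $b\in[\lambda]$, let $\overline{D}_b := [n]\setminus(\cF\cup D)$ (of size $n-h-d$), and define
\[
h_b(X) := \prod_{j'\in\overline{D}_b}(X-\alpha_{j'}x_{a_{j',b}}) \in \mathbb{F}[G][X],
\]
of degree $n-h-d$. With $\mu := d-k+h$, each polynomial $X^m h_b(X)$ for $m\in\{0,\ldots,\mu-1\}$ has degree at most $r-1$ and hence lies in the row span of the parity check matrix of the $b$-th encoding; since $h_b$ vanishes on $\overline{D}_b$, this yields the recovery system
\begin{equation} \label{eq:proposal_system}
\sum_{i\in\cF}(\alpha_i x_{a_{i,b}})^m h_b(\alpha_i x_{a_{i,b}})\boldc_{i,b}^\intercal = -\sum_{j\in D}(\alpha_j x_{a_{j,b}})^m h_b(\alpha_j x_{a_{j,b}})\boldc_{j,b}^\intercal,\quad m=0,\ldots,\mu-1.
\end{equation}

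Next I would compress \eqref{eq:proposal_system} by a subspace $S\subseteq\mathbb{F}[G]$ of dimension $h\zeta^t/\mu$ (an integer since $\mu\mid\zeta$). Let $T:=\{a_{i,b}:i\in\cF\}$ and let $H:=\langle\mu\rangle\le\mathbb{Z}_\zeta$ be the unique index-$\mu$ subgroup. Taking
\[
S := \operatorname{span}\{x_g : g|_T \in A\}
\]
for an appropriate $A\subseteq\mathbb{Z}_\zeta^{T}$ of cardinality $h\zeta^{|T|}/\mu$ assembled from cosets of $H^{|T|}$, $S$ is invariant under $x_{e_c}$ for every $c$ with $e_c\notin T$. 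Hence for each good helper $j$ the operator $S(\alpha_j x_{a_{j,b}})^m$ equals $\alpha_j^m$ times a permutation of $S$, so the helper contribution on the right-hand side of \eqref{eq:proposal_system} depends only on $S\boldc_{j,b}^\intercal$; this yields the $h\zeta^t/\mu$ bound, with help-by-transfer since the rows of $S$ are standard unit vectors. Bad helpers transmit their entire $\zeta^t$-symbol chunk, while non-helpers contribute nothing because $h_b$ annihilates them.

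The main obstacle is twofold. First, I need $S$ to also be invariant under multiplication by $h_b(\alpha_j x_{a_{j,b}})$ for each good helper $j$, so that the compressed helper term is genuinely computable from $S\boldc_{j,b}^\intercal$ alone; this should hold because the coset-based choice of $A$ forces every monomial $x_g$ appearing in the expansion of $h_b(\alpha_j x_{a_{j,b}})$ to have $g|_T\in H^{|T|}$, thereby preserving the structure of $A$. Second, and harder, I must show that the stacked projection operator
\[
L := \Bigl(S\,(\alpha_i x_{a_{i,b}})^m\,h_b(\alpha_i x_{a_{i,b}})\Bigr)_{i\in\cF,\ m\in\{0,\ldots,\mu-1\}}\in\mathbb{F}^{h\zeta^t\times h\zeta^t}
\]
is nonsingular, so that the compressed system uniquely recovers $\{\boldc_{i,b}\}_{i\in\cF}$. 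Since each $h_b(\alpha_i x_{a_{i,b}})$ is invertible (a product of invertible group-algebra differences, exactly as in the MDS argument of Theorem~\ref{sec:MDS}), invertibility of $L$ reduces to a multi-coordinate analogue of \eqref{eq:direct_sum_final}: the $\mu$ shifts of $A$ along each direction $e_c$ with $c\in T$ must tile $\mathbb{Z}_\zeta^{T}$ with the correct total multiplicity. I would establish this by iterating the coset decomposition $\mathbb{Z}_\zeta = H\sqcup(H+1)\sqcup\cdots\sqcup(H+\mu-1)$ in each coordinate of $T$ and choosing $A$ so that the resulting tiling is balanced, rendering $L$ nonsingular.
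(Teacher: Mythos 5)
Your setup is fine up to and including the derivation of the $d-k+h$ block equations: the degree count $\deg\bigl(X^m h_b(X)\bigr)\le (d-k+h-1)+(n-h-d)=r-1$ is correct, so those vectors do lie in the row span of the $b$-th parity-check matrix and do annihilate the non-helpers. From that point on, however, the step you yourself call ``the main obstacle'' is precisely the content of the proposition, and it is neither established nor, as sketched, likely to go through in the form you describe. First, the claimed invariance of $S$ under multiplication by $h_b(\alpha_j x_{a_{j,b}})$ does not follow from your coset structure: $\overline{D}_b$ may contain nodes $j'$ with $a_{j',b}\in T$, and then the expansion of $h_b(\alpha_j x_{a_{j,b}})$ contains monomials $x_g$ with $g|_T\notin H^{|T|}$ (a single such factor contributes a unit shift in a $T$-coordinate, and $1\notin\langle\mu\rangle$). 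Second, your dimension count $h\zeta^{|T|}/\mu$ and the ``tiling along the directions of $T$'' picture implicitly assume the $h$ failed nodes carry $h$ distinct directions; when several failed nodes share a direction (so $|T|=z<h$), invertibility of $L$ must come from the distinctness of the scalars $\alpha_i$ rather than from group shifts, and a balanced tiling of $\mathbb{Z}_\zeta^{T}$ says nothing about that case. Third, even in the all-distinct case you give no explicit $A$ and no proof that the stacked operator $L$ is nonsingular; this simultaneous-invertibility statement is the hard part of the claim and cannot be deferred.

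The paper avoids the one-shot system altogether by repairing sequentially. It partitions $\cF$ into groups $\cF_1,\dots,\cF_z$ by direction and, at step $\mu$, repairs $\cF_\mu$ by rerunning the single-failure argument of Theorem~\ref{sec:MDS} with the already-repaired representatives promoted to helpers; the not-yet-repaired failed nodes are simply placed in that step's annihilator set. The effective repair degree therefore grows with $\mu$, the relevant subspace $S_\mu$ has index $s_\mu=d-k+\mu$ (this is exactly why $\zeta=\operatorname{lcm}(1,\dots,r)$ is required, not merely $d-k+h\mid\zeta$), and the only invertibility needed is the one-direction direct sum \eqref{eq:direct_sum_final}, already proved. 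The total download from a good helper is then $\bigl|\cup_{\mu=1}^{z}G_\mu\bigr|=\frac{z\zeta^t}{d-k+z}\le\frac{h\zeta^t}{d-k+h}$ by the counting argument of Lemma~\ref{lem:count}. If you want to pursue your one-shot route, you must exhibit $A$ explicitly, prove invariance under $h_b(\alpha_jx_{a_{j,b}})$ despite non-helpers with directions in $T$, and prove $L$ nonsingular including the repeated-direction case; as written, the proof is incomplete at its central step.
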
  
	
	\begin{proof}
		Consider the $r\zeta^t \times n\zeta^t$ matrix  $A^{(b)}=(A_{j}^{i-1})_{i \in [r],j\in [n]}$ over $\bF$ such that $A_{j} \coloneqq \alpha_j\cdot x_{a_{j,b}}$.  Then, the code for the $b$-th encoding is the $(n,k,\ell=\zeta^t)_{\bF}$ MDS array code with $A^{(b)}$ as a parity check matrix. We use $\boldc_{j,b}\in \mathbb{F}^{\zeta^t}$ to denote the information that is stored on node $j$ corresponding to the $b$-th encoding.   
  
Fix $b\in[\lambda]$, and consider the set $W=\{w \in [t] \mid \text{there is at least one}~i \in \cF$~\text{such that}~$u^{(i)}_b=w\}.$	
		Let $z \coloneqq |W|$ and $W =\{w_1,\dots,w_z\}$. Observe that~$z\le h$ since $|\mathcal{F}|=h$.
 For $\mu \in [z]$, set $\cF_\mu = \{i \in \cF \mid  u^{(i)}_b=w_\mu\}.$
Clearly, $\{\cF_\mu\}_{\mu \in [z]}$ is a partition of $\cF$. For $\mu \in [z]$, let $s_{\mu}=d-k+{\mu}$ and $$G_{\mu} = \{g \in G \mid g(w_{\mu})\bmod s_{\mu}=0\},$$ i.e.,~$G_\mu$ is the set of all elements of~$G=\mathbb{Z}_\zeta ^t$ whose $w_\mu$-th entry is divisible by $s_\mu$.

Again for $\mu \in [z]$, we define the subspace $$S_\mu = \text{span} \{x_g \vert g \in G_{\mu}\}$$ of $\mathbb{F}[G]$ with $\dim(S_{\mu})=|G_{\mu}|$. For any $j\neq w_\mu$, the subspace $S_{\mu}$ is an invariant subspace under multiplication by  $x_{e_j}$. The $s_{\mu}$ subspaces $S_{\mu},S_{\mu}x_{e_{w_\mu}},\ldots,S_{\mu}x_{e_{w_\mu}}^{s_{\mu}-1}$ form a direct sum of the vector space 
		$\mathbb{F}[G]$, since the $s_{\mu}$ cosets of $G_{\mu}$ are $G_{\mu}, 
G_{\mu}+e_{w_{\mu}}, \dots, G_{\mu}+(s_{\mu}-1)e_{w_{\mu}}$.  We also view $S_{\mu}$ as a $|G_{\mu}| \times \zeta^t$ matrix over $\mathbb{F}_q$ with $0,1$ entries that represent the basis of $S_{\mu}$, i.e., each row is a zero vector except one coordinate that corresponds to some element $x_g$ with $g \in G_{\mu}$. 
		Additionally, we define subspace $S \coloneqq\text{span}\{x_g \vert g \in \cup_{\mu=1}^{z} G_{\mu}\}$ of dimension $|\cup_{\mu=1}^{z} G_{\mu}|$, which can similarly be viewed as a $|\cup_{\mu=1}^{z} G_{\mu}| \times \zeta^t$ matrix.
		We employ a sequential repair process to obtain the failed symbols $\{\boldc_{i,b} \mid i \in \cF \}$.  This procedure has $z$ steps, of which the $\mu$-th step is as follows. 
	
				\begin{itemize}
				\item \textit{Step $\mu$: }  
					Suppose $\{\boldc_{i,b} \mid i \in \cup_{v=1}^{\mu-1}\cF_{v}\}$ are already repaired (no assumption for $\mu=1$).  
			To repair $\{\boldc_{i,b} \mid i \in \cF_{\mu}\}$,  every node $j \in D$ sends $S_{\mu}\boldc_{j,b}^\intercal$  if $a_j \ne a_{w_\mu}$, 
  and otherwise sends the entire $\boldc_{j,b}$. 
				\end{itemize}
	
		The following lemma shows that identical helper information is needed to repair $\boldc_{i,b}$ for any $i \in \cF_\mu$ and  
		ensures that step $\mu$ repairs $\{\boldc_{i,b} \mid i \in \cF_{\mu}\}$. See Appendix~\ref{appendix:seq_step} for a proof. 
		\begin{lemma}\label{lem:seq_step} For any $\mu \in [z]$ and $i \in \cF_{\mu}$, 
			$\boldc_{i,b}$ can be recovered from $\{\boldc_{j,b} \mid j \in \cup_{v=1}^{\mu-1}\cF_v\} \cup \{S_{\mu}\boldc_{j,b}^\intercal \mid j \in D, a_{j,b} \ne a_{i,b} \} \cup \{\boldc_{j,b}^\intercal \mid j \in D, a_{j,b} = a_{i,b} \}$. If $j \notin D$, then node $j$ does not participate in the repair.  
		\end{lemma}

		Clearly, at the end of the sequential procedure, all the erased contents corresponding to the $b$-th encoding are repaired. If $j \in D$ and $a_{j,b} \in \{a_{i,b} \mid i \in \cF\}$, then entire $\boldc_{j,b}$ is downloaded from node $j$. By definition, $S\boldc_{j,b}^\intercal$ contains all symbols of $S_{\mu}\boldc_{j,b}^\intercal$ for all $\mu \in [z]$. Therefore, if  $a_{j,b} \notin \{a_{i,b} \mid i \in \cF\}$, then node $j \in D$ needs to transmit $S\boldc_{j,b}^\intercal$ during the entire process. Thus, it sends $\dim(S)$ symbols. We have the following lemma.
		\begin{lemma} \label{lem:count} The dimension of $S$ is $\frac{z\zeta^t}{d-k+z}$. 
		\end{lemma}
		See Appendix~\ref{appendix:count} for a proof of Lemma~\ref{lem:count}. 
  Since $z \le h$, it follows that $\dim(S) \le \frac{h\zeta^t}{d-k+h}$, and therefore every node~$j\in D$ transmits at most this much information if~$a_{j,b} \notin \{a_{i,b} \mid i \in \cF\}$, as claimed.
	\end{proof}

	It follows from the minimum distance requirement of $\cU$ that $u^{(j)}_b \in \{u^{(i)}_b \mid i \in \cF\}$ for at most $h(1-\delta)\lambda$ values of $b \in [\lambda]$, for any $j \in D$. Thus, by Proposition~\ref{prop}, each helper node $j \in D$ transmits at most $\frac{h\zeta^{t}}{d-k+h}$ symbols for the repair of at least $\lambda-h(1-\delta)\lambda$ encodings and $\zeta^t$ symbols each for the remaining ones. Therefore, the total download from each helper node is at most 
	\begin{align*}
		& (\lambda-h(1-\delta)\lambda) \frac{h\zeta^t}{d-k+h}+h(1-\delta)\lambda \zeta^t \\ &= \frac{h\lambda \zeta^t}{d-k+h}+ h(1-\delta)\lambda \Big(\zeta^t-\frac{h\zeta^t}{d-k+h}\Big) \\ & =\frac{h\ell}{d-k+h}\Big(1+(1-\delta)(d-k)\Big).   
	\end{align*}	
	Hence, for any valid $(h,d)$,  the code given by Construction~\ref{construction:MDS_simul} is an $(n,k,\ell=\lambda \zeta^t)_{\mathbb{F}}$ MDS array code with $(h,d)$ $\varepsilon$-optimal help-by-transfer property for $\varepsilon=(1-\delta)(d-k)$. Therefore, for  $\varepsilon=(1-\delta)(r-1)$, it is an $(n,k=n-r,\ell=\lambda \zeta^t)_{\mathbb{F}}$ MDS array code with $(h,d)$ $\varepsilon$-optimal help-by-transfer property simultaneously for all $(h,d)$ such that $1 \le h \le r$ and $k \le d \le n-h$.   
	
	\subsection{Existence of  Codes with Logarithmic Sub-packetization Level}
	Similar to Section~\ref{sec:existence}, here we show that sub-packetization level $\ell$ scaling logarithmically with $n$ is possible for codes given by Construction~\ref{construction:MDS_simul}. For $\varepsilon>0$, let $\delta=1-\frac{\varepsilon}{r-1}$, and choose a positive integer $t>\frac{r-1}{\varepsilon}$. 
	 By the GV bound, there exists a code over $[t]$ with block length $\lambda$, relative minimum distance at least $\delta$ and code size  $t^{\lambda(1-h_t(\delta)-o(1))}$. 
	If we pick $\cU$ as this code, then our construction yields an MDS array code with $n= t^{((1-h_t(\delta)-o(1))/\zeta^t)\ell}$. For constant $r$ and $t$, this gives $n=\Omega(\exp(\theta \ell))$, for some constant $\theta$. The field size requirement is $\zeta n+1$. This results in the following theorem. 
	\begin{theorem}\label{thm:simul} 
		Given an integer $r \ge 1$ and a real number $\varepsilon >0$, there is a constant $\theta=\theta(r,\varepsilon)$ such that for infinite values of $\ell$ there exists an $(n=\Omega(\exp(\theta\ell)),k=n-r,\ell)_\mathbb{F}$ MDS array code with $(h,d)$ $\varepsilon$-optimal help-by-transfer property for all $1 \le h \le r$ and $k \le d \le n-h$ simultaneously. Moreover, the field size $|\mathbb{F}|$ scales linearly with $n$. 
	\end{theorem}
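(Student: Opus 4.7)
The plan is to follow the same template used in Section~\ref{sec:existence} for Theorem~\ref{thm:eps_msr}, adapting the parameter choices to the multiple-failure setting. Given $r \ge 1$ and $\varepsilon > 0$, I would first set $\delta \coloneqq 1 - \frac{\varepsilon}{r-1}$, so that the previously established repair analysis for Construction~\ref{construction:MDS_simul} yields exactly the desired $(h,d)$ $\varepsilon$-optimal help-by-transfer property for all $1 \le h \le r$ and $k \le d \le n-h$ simultaneously. The remaining task is purely an existential/counting argument: produce an outer code $\cU$ over the alphabet $[t]$ of block length $\lambda$ and relative distance at least $\delta$ that is as large as possible, then translate its size into the length $n$ of the resulting MDS array code via the relation $\ell = \lambda \zeta^t$.

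Next, I would pick $t$ to be a (constant) integer with $t > (r-1)/\varepsilon$, which in particular forces $\delta < 1 - 1/t$ and hence makes the $t$-ary entropy $h_t(\delta)$ strictly less than $1$. Invoking the $q$-ary Gilbert--Varshamov bound \cite{ecc_Mac_Slo}, I obtain, for arbitrarily large $\lambda$, a code $\cU$ over $[t]$ of block length $\lambda$, relative minimum distance at least $\delta$, and size $|\cU| \ge t^{\lambda(1 - h_t(\delta) - o(1))}$. Instantiating Construction~\ref{construction:MDS_simul} with this $\cU$ yields an $(n, k = n-r, \ell = \lambda \zeta^t)_\mathbb{F}$ MDS array code with the desired simultaneous $(h,d)$ $\varepsilon$-optimal help-by-transfer property, where $n = |\cU|$.

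It then remains to rewrite the size bound as $n \ge t^{\ell (1 - h_t(\delta) - o(1))/\zeta^t}$, using $\lambda = \ell/\zeta^t$. For fixed $r$ (so that $\zeta = \operatorname{lcm}(1,\ldots,r)$ is a constant) and fixed $t$, both $\zeta^t$ and $1 - h_t(\delta)$ are positive constants, so we can set
\[
\theta \coloneqq \frac{(1 - h_t(\delta)) \ln t}{\zeta^t} > 0
\]
and conclude that $n = \Omega(\exp(\theta \ell))$, with $\theta = \theta(r, \varepsilon)$ depending only on $r$ and $\varepsilon$ as claimed. Finally, the field size requirement from Construction~\ref{construction:MDS_simul} is $|\mathbb{F}| \ge \gcd(\zeta, q-1)\, n + 1 \le \zeta n + 1$, which is linear in $n$ for constant $r$.

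There is no real obstacle here beyond verifying that the choice $t > (r-1)/\varepsilon$ is sufficient to make $h_t(\delta) < 1$ and hence to make the GV exponent positive; this is immediate from the definition of $h_t$ once $\delta < 1 - 1/t$, so the entire proof is a short computation assembling the GV bound, the already-proven repair analysis of Construction~\ref{construction:MDS_simul}, and the field-size bound. As noted in Remark~\ref{remark:AG} (for the $\varepsilon$-MSR case), one could replace the existential GV step by explicit AG codes when $t \ge 49$ is a square of a prime power to obtain explicit constructions with a possibly better constant $\theta'$, but this is not required for the theorem as stated.
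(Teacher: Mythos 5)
Your proposal is correct and follows essentially the same route as the paper: the same choice of $\delta=1-\frac{\varepsilon}{r-1}$ and $t>\frac{r-1}{\varepsilon}$, the same appeal to the Gilbert--Varshamov bound to obtain $\cU$, the same translation of $|\cU|$ into $n$ via $\ell=\lambda\zeta^t$, and the same field-size observation. The only additions are small explicit checks (that $h_t(\delta)<1$ and an explicit formula for $\theta$) that the paper leaves implicit.
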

Similar to Remark~\ref{remark:AG}, it is possible to get MDS array codes having smaller sub-packetization levels in some cases by choosing $\cU$ as explicit AG codes. 
	\section{Conclusion}
	The $\varepsilon$-MSR codes framework emerged out of the need for MDS codes with small sub-packetization, near-optimal repair bandwidth, as well as the load balancing property. In this paper, we provide the first construction of $\epsilon$-MSR codes such that a failed node can be repaired by accessing an arbitrary set of helper nodes. We also extended this construction to obtain small sub-packetization level MDS array codes that can repair multiple failed nodes with a near-optimal download from each helper node. An interesting future research direction is to explore the possibility of reducing the sub-packetization level of $\varepsilon$-MSR codes even further.

	\printbibliography
	\appendix
	
	\subsection{A Lemma on roots of unity} 
  \label{appendix:alpha} 
 	\begin{lemma} \label{lem:alpha}
		For any two distinct $i,j \in [n]$,  $\alpha_i \alpha_j^{-1}$ is not an $s$-th root of unity, i.e.,  $(\alpha_i \alpha_j^{-1})^s \ne 1$.
	\end{lemma}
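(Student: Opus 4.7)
The plan is to reduce the claim $(\alpha_i\alpha_j^{-1})^s\ne 1$ to an elementary divisibility fact about the exponent of $\alpha$. Since $\alpha$ is a primitive element of $\mathbb{F}$, we have $\alpha^m=1$ if and only if $(q-1)\mid m$, so the whole statement becomes: $(q-1)\nmid s(i-j)$ whenever $i\ne j$ in $[n]$.

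The first step is to rewrite $\alpha_i\alpha_j^{-1}=\alpha^{i-j}$, so that $(\alpha_i\alpha_j^{-1})^s=\alpha^{s(i-j)}$, and it suffices to prove the divisibility statement above. The next step is to reduce using the gcd $g=\gcd(s,q-1)$, which by the hypothesis on $|\mathbb{F}|$ satisfies $q\ge gn+1$, equivalently $(q-1)/g\ge n$. Writing $s=g s'$ and $q-1=gq'$ with $\gcd(s',q')=1$, the condition $(q-1)\mid s(i-j)$ becomes $q'\mid s'(i-j)$, and coprimality of $s'$ and $q'$ turns this into $q'\mid (i-j)$.

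The final step is purely numerical: we have $1\le|i-j|\le n-1<n\le q'$, while $i\ne j$ gives $i-j\ne 0$. Thus $q'\nmid (i-j)$, which contradicts the assumption that $(\alpha_i\alpha_j^{-1})^s=1$. There is no real obstacle here; the only subtle point is invoking the field-size hypothesis $q\ge gn+1$ (with $g=\gcd(s,q-1)$) from the setup of Construction~\ref{construction:MDS} to get the bound $q'\ge n$, after which the proof is immediate.
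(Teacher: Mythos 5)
Your proof is correct and follows essentially the same route as the paper's: reduce to $(q-1)\nmid s(i-j)$, divide through by $g=\gcd(s,q-1)$, and use the field-size hypothesis $q\ge gn+1$ to conclude $(q-1)/g\ge n>|i-j|$. The only difference is that you spell out the coprimality step $\gcd(s',q')=1$ that the paper leaves implicit when asserting the equivalence with $\frac{q-1}{g}\mid(i-j)$.
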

 \begin{proof}     
		Note that $(\alpha_i \alpha_j^{-1})^s =(\alpha^{i-1} \alpha^{1-j})^s=\alpha^{(i-j)s}.$ Assume $i>j$. 
		For contradiction, suppose $\alpha^{(i-j)s}=1$. 
		This requires $(q-1) \mid (i-j)s$, since $\alpha$ is a primitive element. This is equivalent to $\frac{q-1}{g}$ dividing $(i-j)$.  But it is not possible, as $\frac{q-1}{g} \ge n$ and $(i-j)<n$. Therefore, for $i>j$, the result holds. 
		If $\alpha^{(i-j)s}=1$, then $(\alpha^{(i-j)s})^{-1}=\alpha^{(j-i)s}=1$. Hence, the lemma statement is true for $j>i$ as well.   \end{proof}
	\subsection{Proof of Lemma~\ref{lem:seq_step}}  \label{appendix:seq_step}
For every~$j\in[\mu-1]$ fix some~$i_j\in \mathcal{F}_j$, which is possible since~$\mathcal{F}_j$ is non-empty by their definition.
	Let $D_{\mu}=D \cup \{i_1,\dots,i_{\mu-1}\}$ and $\overline{D}_{\mu}=[n] \setminus (\{i\}\cup D_{\mu})$.  Let $h(X)\in F[G][X]$ be the annihilator polynomial of the set $\overline{D}_{\mu}$, i.e., 
	$$h(X)=\prod_{j\in \overline{D}_{\mu}}(X-\alpha_jx_{a_{j,b}}).$$ 
	If $d=n-1$, then $\overline{D}_{\mu}$ is empty and we define $h(X)=I$. 
 
Since $a_{i,b}=e_{w_{\mu}}$, 
we have $S_{\mu}\oplus S_{\mu}x_{a_{i,b}}\oplus\ldots\oplus S_{\mu}x_{a_{i,b}}^{s_\mu-1}=\mathbb{F}[G].$ 
 It can be verified that $h(\alpha_{i}x_{a_{i,b}})$ is an invertible matrix. Therefore, together with $\alpha_i \ne 0$, we have that also
	$$S_{\mu}h(\alpha_{i}x_{a_{i,b}})\oplus S_{\mu}(\alpha_ix_{a_{i,b}})h(\alpha_ix_{a_{i,b}})\oplus\ldots\oplus S_{\mu}(\alpha_ix_{a_{i,b}})^{s_\mu-1}h(\alpha_{i}x_{a_{i,b}})=\mathbb{F}[G].$$ 
	Using arguments similar to those used in the proof of the repair property in Theorem~\ref{sec:MDS} it can be shown that, 
	$$S_{\mu}(\alpha_{i}x_{a_{i,b}})^mh(\alpha_{i}x_{a_{i,b}})\boldc_{i,b}^\intercal=
	-\sum_{j\in D_{\mu}}S_{\mu}(\alpha_jx_{a_{j,b}})^mh(\alpha_jx_{a_{j,b}})\boldc_{j,b}^\intercal~~\text{ for } m=0,\ldots,s_\mu-1.$$
	Therefore, if we get enough information from the nodes in $D_{\mu}$ such that we can construct the RHS of the above $s_{\mu}$ equations, then we can recreate $\boldc_{i}$. Fix any $j\in D_{\mu}$. For the $s_\mu$ equations to be constructed, we will need the following information from node $j$:
	$$S_{\mu}(\alpha_jx_{a_{j,b}})^mh(\alpha_jx_{a_{j,b}})\boldc_{j,b}^\intercal \text{ for } m=0,\ldots,s_\mu-1.$$
	If $a_{j,b}\neq a_{i,b}$ then $S_{\mu}$ is an invariant subspace of the operator $x_{a_{j,b}}$, which means that only $S_{\mu}h(\alpha_jx_{a_{j,b}})\boldc_{j,b}^\intercal$ is needed from helper node $j$.  It can be seen that the subspace $S_{\mu}$ is also an invariant subspace under the multiplication by $h(\alpha_jx_{a_{j,b}})$. Therefore, any helper node $j \in D_{\mu}$ with $a_j\neq a_{i}$ only needs to contribute  $S_{\mu}\boldc_{j,b}^\intercal$. 
	\subsection{Proof of Lemma~\ref{lem:count}} \label{appendix:count}
	Recall that $\dim(S)= |\cup_{\mu=1}^{z} G_{\mu}|$. From the definition of  $G_{\mu}$, we get $|G_{\mu}|= \frac{\zeta^t}{d-k+\mu}$ for all $\mu \in [z]$. In  particular, $|G_{1}|= \frac{\zeta^t}{d-k+1}$.  Fix any $\hat{z}$ such that $1 \le \hat{z} \le z-1$. Assume that  $|\cup_{\mu=1}^{\hat{z}} G_{\mu}|  = \frac{\hat{z}\zeta^t}{d-k+\hat{z}}$. To prove the lemma by induction, we show that  $|\cup_{\mu=1}^{\hat{z}+1} G_{\mu}| = \frac{(\hat{z}+1)\zeta^t}{d-k+\hat{z}+1}$. 
	Indeed,
	\begin{align*}
		|\cup_{\mu=1}^{\hat{z}+1} G_{\mu}|= |\cup_{\mu=1}^{\hat{z}} G_{\mu}|+ |G_{\hat{z}+1}|-|\big(\cup_{\mu=1}^{\hat{z}} G_{\mu}\big)\cap G_{\hat{z}+1} |. 
	\end{align*}
	Let $P=\{p \in \{0,1,\dots, \zeta-1\} \mid p \bmod (d-k+\hat{z}+1)=0\}$. It is easy to see that $|P|= \frac{\zeta}{d-k+\hat{z}+1}$.  
	Hence, we have
	 $$|\big(\cup_{\mu=1}^{\hat{z}} G_{\mu}\big)\cap G_{\hat{z}+1} | =  \frac{|\cup_{\mu=1}^{\hat{z}} G_{\mu}|}{d-k+\hat{z}+1}.$$ Therefore,
	\begin{align*}
		|\cup_{\mu=1}^{\hat{z}+1} G_{\mu}| = \frac{\hat{z}\zeta^t}{d-k+\hat{z}} \Big(1-\frac{1}{d-k+\hat{z}+1}\Big)+ \frac{\zeta^t}{d-k+\hat{z}+1}=\frac{(\hat{z}+1)\zeta^t}{d-k+\hat{z}+1}. 	
	\end{align*}
\end{document}